\pgfplotsset{every axis legend/.append style={
at={(1.02,1)},
anchor=north west}}
\pgfplotsset{every axis/.append style={line width=1pt}}
\tikzset{
      biglabel/.style= {text width=2.5cm, font=\scriptsize\sffamily},
      biglabelleft/.style={biglabel,node distance=45},
      biglabelright/.style={biglabel,node distance=55},
      biglabelabove/.style={biglabel,node distance=55,text centered},
      biglabelbelow/.style={biglabel,node distance=55,text centered}
      }
\tikzset{
      initial/.style={accepting},
      every state/.style={thick,minimum width=2mm}
      }
\tikzset{
      every transition/.style={fill,minimum width=2mm,minimum height=5.5mm},
      every token/.style={fill=lightgray, text=black,font=\scriptsize}
      }
\tikzset{
      arc/.style={->,shorten >=0.5pt,thick,>=stealth},
      transarc/.style={arc,>=diamond},
      inhibarc/.style={arc,>=o},
      resetarc/.style={arc,->>},
      pre/.style={<-,shorten <=0.5pt,>=stealth,thick},
      post/.style={->,shorten >=0.5pt,>=stealth,thick}
      }
\tikzset{
      pathstate/.style={place,fill=black,minimum size=1mm},
      }
\tikzset{help lines/.style={step=1cm,gray,very thin},
  axes/.style={->,>=triangle 60,semithick},
  edges/.style={->,font=\footnotesize}
}
\tikzstyle{level 1}=[level distance=2cm, sibling distance=3cm]
\tikzstyle{level 2}=[level distance=2cm, sibling distance=3cm]
\newtheorem{theorem}{Theorem}[section]
\newtheorem{lemma}[theorem]{Lemma}
\newtheorem{proposition}[theorem]{Proposition}
\newenvironment{proof}[1][Proof]{\begin{trivlist}
\item[\hskip \labelsep {\bfseries #1}]}{\end{trivlist}}
\newenvironment{definition}[1][Definition]{\begin{trivlist}
\item[\hskip \labelsep {\bfseries #1}]}{\end{trivlist}}
\newcommand{\qed}{\nobreak \ifvmode \relax \else
      \ifdim\lastskip<1.5em \hskip-\lastskip
      \hskip1.5em plus0em minus0.5em \fi \nobreak
      \vrule height0.75em width0.5em depth0.25em\fi}
\renewcommand{\cref}[1]{\Cref{#1}}
\newcommand{\C}{\ensuremath{\mathcal{C}}\xspace}
\newcommand{\R}{\ensuremath{\mathbb{R}}\xspace}
\newcommand{\Rpos}{\ensuremath{\R_{\geq 0}}\xspace}
\newcommand{\N}{\ensuremath{\mathbb{N}}\xspace}
\newcommand{\Z}{\ensuremath{\mathbb{Z}}\xspace}
\newcommand{\Nzero}{\ensuremath{\N_{0}}\xspace}
\newcommand{\allIntervals}{\ensuremath{\mathcal{I}}\xspace}
\newcommand{\defTAPN}{\ensuremath{(P, T, \mathit{IA}, \mathit{OA}, c, \mathit{Type}, \iota)}\xspace}
\newcommand{\lab}{\ensuremath{Lab}\xspace}
\newcommand{\AP}{\ensuremath{\mathit{AP}}\xspace}
\newcommand{\spre}[1]{\ensuremath{{}^\circ #1}\xspace}
\newcommand{\spost}[1]{\ensuremath{#1^\circ}\xspace}
\newcommand{\preset}[1]{\ensuremath{{}^\bullet #1}\xspace}
\newcommand{\postset}[1]{\ensuremath{#1^\bullet}\xspace}
\newcommand{\age}[1]{\ensuremath{\mathit{v}_{#1}}\xspace}
\newcommand{\place}[1]{\ensuremath{\mathit{pl}_{#1}}\xspace}
\newcommand{\marking}[1]{\ensuremath{(\place{#1}, \age{#1})}\xspace}
\newcommand{\symMarking}[1]{\ensuremath{(\place{#1}, \Val{#1})}\xspace}
\newcommand{\move}[3]{\ensuremath{\mathit{move}(#1{}, #2{}, #3{})}\xspace}
\newcommand{\0}{\ensuremath{\mathbf{0}}\xspace}
\newcommand{\up}{\ensuremath{^\uparrow}\xspace}
\newcommand{\reset}[1]{\ensuremath{^{#1=0}}\xspace}
\newcommand{\IN}{\ensuremath{\mathit{IN}}\xspace}
\newcommand{\inv}[1]{\ensuremath{I(#1{})}\xspace}
\newcommand{\g}[3]{\ensuremath{g(#1{}, #2{}, #3{})}\xspace}
\newcommand{\goes}[1]{\ensuremath{\stackrel{#1}{\longrightarrow}}\xspace}
\newcommand{\symtrans}[1]{\ensuremath{\stackrel{#1}{\leadsto}}\xspace}
\newcommand{\INC}[1]{\ensuremath{\mathit{INC}_{#1}}\xspace}
\newcommand{\BOT}[1]{\ensuremath{\mathit{bot}(#1)}\xspace}
\newcommand{\maxcname}{\ensuremath{\mathit{mc}}\xspace}
\newcommand{\maxc}[1]{\ensuremath{\maxcname(#1)}\xspace}
\newcommand{\maxcnameI}{\ensuremath{\mathit{mci}}\xspace}
\newcommand{\maxcI}[1]{\ensuremath{\maxcnameI(#1)}\xspace}
\newcommand{\abstrname}[1]{\ensuremath{\alpha_{#1}}\xspace}
\newcommand{\abstrnamep}[1]{\ensuremath{\alpha'_{#1}}\xspace}
\newcommand{\abstr}[2]{\ensuremath{\abstrname{#1}\left(#2\right)}\xspace}
\newcommand{\abstrp}[2]{\ensuremath{\abstrnamep{#1}\left(#2\right)}\xspace}
\newcommand{\extplace}[2]{\ensuremath{\mathit{ext}_{#1}\left(#2\right)}\xspace}
\newcommand{\ext}[1]{\extplace{\place{}}{#1}}
\newcommand{\powerset}[1]{\ensuremath{2^{#1}}\xspace}
\def\slbracket{[\mkern-4.5mu(}
\def\srbracket{)\mkern-4.5mu]}
\newcommand{\lbname}{\ensuremath{\mathit{lb}}\xspace}
\newcommand{\ubname}{\ensuremath{\mathit{ub}}\xspace}
\newcommand{\lb}[2]{\ensuremath{\lbname_{#1}(#2)}\xspace}
\newcommand{\ub}[2]{\ensuremath{\ubname_{#1}(#2)}\xspace}
\newcommand{\lbcname}{\ensuremath{\mathit{lb}^{\eta}}\xspace}
\newcommand{\ubcname}{\ensuremath{\mathit{ub}^{\eta}}\xspace}
\newcommand{\lbc}[2]{\ensuremath{\lbcname_{#1}(#2)}\xspace}
\newcommand{\ubc}[2]{\ensuremath{\ubcname_{#1}(#2)}\xspace}
\newcommand{\lbopname}{\ensuremath{\mathit{lb}^{\lhd}}\xspace}
\newcommand{\ubopname}{\ensuremath{\mathit{ub}^{\lhd}}\xspace}
\newcommand{\lbop}[2]{\ensuremath{\lbopname_{#1}(#2)}\xspace}
\newcommand{\Val}[1]{\ensuremath{W_{#1}}\xspace}
\newcommand{\setOfVal}{\ensuremath{\mathcal{\Val{}}^{\C}}\xspace}
\newcommand{\Normal}{\ensuremath{\mathit{Normal}}\xspace}
\newcommand{\Inhib}{\ensuremath{\mathit{Inhib}}\xspace}
\newcommand{\Transport}[1]{\ensuremath{\mathit{Transport}_{#1}}\xspace}
\newcommand{\maxgc}{\ensuremath{\mathit{gc}}\xspace}
\newcommand{\maxclock}{\ensuremath{n}\xspace}
\newcommand{\cnt}[3]{\ensuremath{\mathit{count}^{#2}_{#1}(#3)}\xspace}
\newcommand{\setoftokens}{\ensuremath{X}\xspace}
\newcommand{\markingsubseteq}[1]{\ensuremath{\sqsubseteq_{#1}}\xspace}
\newcommand{\inc}[1]{\ensuremath{\mathit{inc}(#1)}\xspace}
\newcommand{\eq}[1]{\ensuremath{\mathit{eq}(#1)}\xspace}
\newcommand{\setOfIncPlaces}{\ensuremath{P_{\mathit{inc}}}\xspace}
\newcommand{\EF}{\ensuremath{\operatorname{{\sf EF}}}\xspace}
\newcommand{\AG}{\ensuremath{\operatorname{{\sf AG}}}\xspace}
\newcommand\TSPACE{\rule{0pt}{2.6ex}}
\newcommand\BSPACE{\rule[-1.2ex]{0pt}{0pt}}
\newcommand{\PADROW}{\TSPACE\BSPACE}
\newcommand{\untimed}[1]{\ensuremath{\mathit{untimed}(#1)}\xspace}
\newcommand{\abstrsubseteq}{\ensuremath{\subseteq}\xspace}
\newcommand{\arbabstr}{\arbabstrsecond{\equiv}}
\newcommand{\arbabstrsecond}[1]{\ensuremath{\abstrname{\mathit{id}},\abstrname{#1}}\xspace}
\begin{document}
%
%
\title{A Forward Reachability Algorithm for \\ Bounded Timed-Arc Petri 
Nets\footnote{The paper was partially supported by VKR Center 
of Excellence MT-LAB.}}

\author{Alexandre David \quad Lasse Jacobsen \quad Morten Jacobsen
\quad Ji\v{r}\'{\i} Srba
\institute{Department of Computer Science, Aalborg University,\\
Selma Lagerl\"{o}fs Vej 300, DK-9220 Aalborg East, Denmark}
}

\def\titlerunning{A Forward Reachability Algorithm for Bounded Timed-Arc
Petri Nets}
\def\authorrunning{A. David, L. Jacobsen, M. Jacobsen \& J. Srba}

\maketitle              
\begin{abstract}
Timed-arc Petri nets (TAPN) are a well-known time extension of the
Petri net model and several translations to networks of timed
automata have been proposed for this model.
We present a direct, DBM-based
algorithm for forward reachability 
analysis of bounded TAPNs extended with 
transport arcs, inhibitor arcs and age invariants.
We also give a complete proof of its correctness,
including reduction techniques based on symmetries and extrapolation.
Finally, we augment the algorithm with a novel state-space
reduction technique introducing a monotonic ordering on markings and prove 
its soundness even in the presence of monotonicity-breaking features like 
age invariants and inhibitor arcs. 
We implement the algorithm within the model-checker
TAPAAL and the experimental results document an encouraging performance 
compared to verification approaches that translate TAPN models 
to UPPAAL timed automata. 
\end{abstract}

\section{Introduction}
Time-dependent models and their formal analysis have attracted
a considerable research activity.
Notable formalisms include timed 
automata (TA)~\cite{DBLP:journals/tcs/AlurD94}, 
time Petri nets (TPN)~\cite{PMMerlin1974}
and timed-arc Petri nets 
(TAPN)~\cite{DBLP:conf/pstv/BolognesiLT90}. 
A comparison of the different modelling formalisms is provided 
in~\cite{S:FORMATS:08}.

We shall focus on the TAPN model where tokens are assigned a nonnegative real
number representing their age and input arcs of transitions contain
time intervals restricting the usable ages of tokens for transition
firing. 
The state-space of the model is in general infinite in two dimensions: 
the number of tokens in a marking can be unbounded, and the continuous 
time aspect induces infinitely 
many clock 
valuations. Indeed, the reachability problem for the model
is undecidable~\cite{TAPN_reachability}, while coverability remains
decidable~\cite{AN:BQQ}. Moreover, for modelling purposes
additional features like inhibitor/transport arcs and
age invariants are needed but they cause the undecidability 
also of the coverability problem~\cite{memics_paper}.

We restrict our focus to bounded TAPNs where the maximum number of tokens in
all reachable markings is fixed.
This model is equally expressive to networks of timed 
automata~\cite{S:ICATPN:05}
and efficient translations from
TAPN into UPPAAL timed automata~\cite{JJMS:SOFSEM:11} 
have been implemented and employed
in the  model-checker TAPAAL~\cite{DJJJMS:TACAS:12}. The
translation approach has though some drawbacks:
experimentation with state-space reduction techniques is difficult and
the engine does not return error traces when symmetry reduction is enabled.

We therefore design a novel reachability algorithm for extended TAPN that
incorporates an efficient extrapolation, symmetry reduction and 
monotonic inclusion techniques to optimize its performance, while at
the same time returning error traces with concrete time delays.
We give a complete proof of the algorithm correctness,
including all the optimization techniques. 
We provide an efficient ({\tt C++}),
open-source implementation of the algorithm and integrate the new engine
into the tool TAPAAL.
The experiments confirm a high efficiency of the new reachability algorithm and
we document this by two larger case-studies.
 
\paragraph{Related work.}
Verification techniques for TAPNs include
a backward coverability algorithm based on
existential zones~\cite{AN:BQQ} (notably terminating
also for unbounded nets)
and a forward reachability algorithm based
on region generators presented in~\cite{ADMN:NJC:07}. 
Both algorithms rely on the monotonic behavior
of the generated transition systems, however,  
inhibitor arcs and age invariants break this monotonicity~\cite{memics_paper} 
and hence the techniques are not applicable for extended TAPNs.
Backward algorithms are generally rather 
inefficient for on-the-fly state-space exploration and for the
employment of state-space reductions while 
the forward algorithm from~\cite{ADMN:NJC:07} is based
on a less efficient region construction instead of a zone-based one.
The algorithms were implemented in prototype
tools with no GUI and are not maintained any more.

There are efficient tools like TINA~\cite{tina} or Romeo~\cite{romeo} 
for model-checking  
Time Petri nets (TPN). The tools are based on abstractions using 
state-class graphs 
but even
though bounded TPN are essentially equally expressive as bounded TAPNs
(see~\cite{S:FORMATS:08} for an overview), 
the translations are exponential
and do not allow to perform a direct performance comparison 
because the modelling capabilities and the treatment 
of time in TPN and TAPN are very different.

The definition of our extrapolation (abstraction) operator 
is following~\cite{DBLP:journals/sttt/BehrmannBLP06} where a similar operator
was suggested for timed automata; our extension (apart from its adaptation
to the TAPN setting) is the handling of dynamic maximum constants depending 
on the current marking (see also~\cite{HKSW:FSTTCS:11} for a dynamic extrapolation
on timed automata). The main novelty is our 
definition of an inclusion operator that incorporates symmetry reduction
and works also for nets with monotonicity-breaking features.


\section{Timed-Arc Petri Nets}\label{sec:symbolic}
Let $\N$
be the set of natural numbers and
let $\Nzero = \N \cup \left\{0\right\}$. By 
$\Rpos$ we denote the set of non-negative real numbers.
The set of 
time intervals $\mathcal{I}$ 
is given by the abstract syntax 
($a \in \N_0, b \in \N$  and $a < b$):
$I ::= [a, a] \, \mid \, [a, b]  \, \mid \,  [a, b)  \, \mid \,   (a, b]  \, \mid \,   (a, b)  \, \mid \,   [a, \infty)  \, \mid \,  (a, \infty)$. 
The set of invariant intervals, $\mathcal{I}_{\text{Inv}}$, consists
of intervals that include $0$.

Let $\C = \{\0,1,2, \ldots, \maxclock\}$ be a finite set of real-valued 
clocks whose elements (numbers) represent names of clocks. 
The clock $\0$ is a special pseudoclock that has always the value 0.
A \emph{(clock) valuation} over $\C$ is a function 
$\age{} : \C \rightarrow \Rpos$
such that $\age{}(\0) = 0$. 
The set of all valuations over the clocks $\C$ is denoted by $\setOfVal$.
Let $\age{}$ be a valuation and $d$ a nonnegative real. 
We let $\age{} + d$ be the valuation such that 
$(\age{} + d)(i) = \age{}(i) + d$ for every $i \in \C \setminus \{\0\}$ 
and $(\age{}+d)(\0) = 0$. Further, for a subset of clocks $R \subseteq \C$, 
we let $\age{}\reset{R}$ be the valuation such that $\age{}\reset{R}(i) = 0$ 
if $i \in R$ and $\age{}\reset{R}(i) = \age{}(i)$ otherwise.

\begin{wrapfigure}{r}{5.6cm}
\vspace{-3mm}
\begin{tikzpicture}[yscale=1,xscale=0.82]
\node[place,label=above:$p_1$,structured tokens={2.1}] at (-2,1) (p1) {};
\node[place,label=below:$p_2$,structured tokens={3.4}] at (-2,-1) (p2) {};
\node[place,label=0:$p_3$] at (2,1.5) (p3) {};
\node[place,label=-4:inv: $<3$,label=10:$p_4$] at (2,0.5) (p4) {};
\node[place,label=0:$p_5$] at (2,-0.5) (p5) {};
\node[place,label=0:$p_6$] at (2,-1.5) (p6) {};
\node[transition, label=above:$t$] at (0,0) (t) {};

\draw[transarc] (p1) -- (t) node[above,midway,sloped] {$[2,3]$};
\draw[arc] (p2) -- (t) node[below,midway,sloped] {$(1,6]$};
\draw[arc] (t) -- (p3) {};
\draw[transarc] (t) -- (p4) node[above,pos=0.75,sloped] {};
\draw[arc] (t) -- (p5) {};
\draw[arc] (t) -- (p6) {};
\end{tikzpicture}
\caption{A TAPN with $Pairing(t) = \{ (p_1,p_4), (p_2,p_3), (\bot, p_5), (\bot, p_6) \}$}\label{fig:pairing-example}
\vspace{-5mm}
\end{wrapfigure}

Let $\Val{} \subseteq \setOfVal$ be a set of valuations and let
$R \subseteq \C$.
We define the \emph{delay operation} as 
$\Val{}\up = \{ \age{}+ d \mid \age{} \in \Val{} \text{ and } d \in \Rpos\}$
and the \emph{reset operation} as 
$\Val{}\reset{R} = \{ \age{}\reset{R} \mid \age{} \in \Val{} \}$.


A \emph{timed labeled transition system} (TLTS) is a 
tuple $(S, \lab, \goes{})$ where
$S$ is a set of states (or processes),
$\lab = \mathit{Act} \cup \Rpos$ is a set of labels, consisting of discrete actions and time delays, and
$\goes{}\, \subseteq (S \times \lab \times S)$ is the transition relation.
We often write $s \goes{\alpha} s'$ instead of $(s, \alpha, s') \in \goes{}$
and if the label is not important, we simply write $s \goes{} s'$.


We shall now define the Timed-Arc Petri Net (TAPN) model, restricting ourselves
to $k$-bounded nets (where every reachable marking has at most $k$ tokens).
An example of a 4-bounded 
TAPN is given in \cref{fig:pairing-example}.
It consists of six places (circles), one transition (rectangle)
and two tokens of age 2.1 and 3.4 representing the current marking.
Input arcs to the transition $t$ contain time intervals and because both
tokens belong to the corresponding interval, the transition can fire,
consume the two tokens in $p_1$ and $p_2$, 
and produce a new token of age $0$ to each of the 
places $p_3$, $p_5$ and $p_6$.
Because the place $p_1$ is connected to $p_4$ via a pair of
transport arcs (denoted by a diamond tip), 
the token of age 2.1 is moved to $p_4$ while its age
is preserved. Should there be more pairs of transport arcs connected to
the transition $t$, we label them with numbers so that the 
routes on which tokens travel are clearly marked. Finally, note that
the place $p_4$ has an associated age invariant, restricting the possible
ages of tokens in the place to strictly less than $3$. Should we in
the current marking first delay 0.9 time units, both tokens in $p_1$ and $p_2$
would 
still fit into their intervals but the transition $t$ is not enabled any more
due to the age invariant in the place $p_4$.



\begin{definition}\label{def:tapn}
A TAPN is a 7-tuple $\defTAPN$ where
\begin{itemize}
\item $P$ is a finite set of places,
\item $T$ is a finite set of transitions such that $P \cap T = \emptyset$,
\item $\mathit{IA} \subseteq P \times T$ is a finite set of input arcs,
\item $\mathit{OA} \subseteq T \times P$ is a finite set of output arcs,
\item $c : \mathit{IA} \rightarrow \mathcal{I}$ assigns intervals to input arcs, 
\item $\mathit{Type} : \mathit{IA} \cup \mathit{OA} \goes{} 
\{\Normal, \Inhib \} \cup \{ \Transport{i} \mid i\in\N \}$
is a function assigning a type to all arcs such that
	\begin{itemize}
	\item $\mathit{Type}(a) = \Inhib \Rightarrow a \in \mathit{IA} \wedge c(a) = [0,\infty)$,
	\item $\mathit{Type}(p,t) = \Transport{\ell} \Rightarrow \exists ! (t,p') \in \mathit{OA} \, . \, \mathit{Type}(t,p') = \Transport{\ell}$ and
	\item $\mathit{Type}(t,p') = \Transport{\ell} \Rightarrow \exists ! (p,t) \in \mathit{IA} \, . \, \mathit{Type}(p,t) = \Transport{\ell}$, and
	\end{itemize}
\item $\iota : P \rightarrow \mathcal{I}_{\mathit{inv}}$ assigns age invariants to places.
\end{itemize}
\end{definition}

For notational convenience, we write $\mathit{Type}(a) = \Transport{}$ 
if $\mathit{Type}(a) = \Transport{\ell}$ for some $\ell$.
For a transition $t \in T$, we define the \emph{preset} of $t$ as 
$\preset{t} = \{ p \in P \mid (p, t) \in \mathit{IA}, 
\mathit{Type}(p,t) \neq \Inhib\}$ and the \emph{postset} of $t$ as 
$\postset{t} = \{ p \in P \mid (t, p) \in \mathit{OA} \}$. 

We denote by $P_{\bot}$ the set $P \cup \{\bot\}$ where
$\bot$ is a special symbol representing a pseudo-place that holds
the currently unused tokens.
The \emph{augmented preset} and \emph{augmented postset} 
of a transition $t$ are defined as the multisets
\begin{align*}
\spre{t} &= \{ p_1, \ldots, p_m \mid \{p_1, \ldots, p_\ell \} = \preset{t}, p_i = \bot \text{ if } \ell < i \leq m \} \\
\spost{t} &= \{ p_1, \ldots, p_m \mid \{p_1, \ldots, p_\ell \} = \postset{t}, p_i = \bot \text{ if } \ell < i \leq m \} 
\end{align*}
where $m = \max(|\preset{t}|,|\postset{t}|)$.
This guarantees that $|\spre{t}| = |\spost{t}|$ for any transition $t$,
a convenient technical detail used in the algorithms.
We also extend the definition of $c$ and $\iota$ such that 
$c(\bot, t) = [0, \infty)$ whenever $\bot \in \spre{t}$ and 
$\iota(\bot) = [0,\infty)$. 

A \emph{token} in a $k$-bounded TAPN is an element from the set 
$\{1, 2, \ldots, k\}$. A \emph{marking} is a pair $M = \marking{}$ where 
$\place{} : \{1,2,\ldots,k\} \rightarrow P_{\bot}$ is the placement function 
and $\age{} : \{1,2,\ldots,k\} \rightarrow \Rpos$ is the age function. 
The placement determines the current location of each token
(it returns $\bot$ if the token is unused) and the age function represents
the age of each token.
The placement function will be sometimes written as a vector where 
e.g. $[p_1,p_2,p_1]$ represents the fact that tokens 1 and 3 are located in 
the place $p_1$ and token 2 is located in $p_2$. 
The set of all markings on a $k$-bounded TAPN $N$ is denoted by 
$\mathcal{M}(N)$. A \emph{marked} $k$-bounded TAPN is a pair 
$(N, \marking{0})$ where $N$ is a $k$-bounded TAPN and 
$\marking{0}$ is the initial marking where $\age{0}(i)=0$ for all $i$, 
$1 \leq i \leq k$.

Since there are always $k$ tokens in any marking (unused ones are
in $\bot$), it is for algorithmic purposes convenient to fix for each 
transition the paths from input to output places.
This is formalized in the function
$\mathit{Pairing} : T \rightarrow 2^{P_\bot \times P_\bot}$ 
such that for every transition $t$ we have
\begin{align*}
Pairing(t) =\{ &(p_1, p'_1), \ldots, (p_\ell, p'_\ell) \mid \{p_1, \ldots, p_\ell \} = \spre{t}, \{p'_1, \ldots, p'_\ell \} = \spost{t} \text{ and } \\
&\mathit{Type}(p_i,t) = \mathit{Type}(t,p'_j) = \Transport{\ell} \Rightarrow i = j \} \, .
\end{align*}
An example of a possible pairing function is given 
in \cref{fig:pairing-example}.

The effect of moving tokens in a placement $\place{}$ 
by firing a transition $t$ with the pairing 
$\mathit{Pairing}(t) = \{(p_1,p_1'),$ $(p_2,p_2'), \ldots, (p_\ell, p_\ell')\}$
is defined in the expected way as follows.
Let $\IN = \{i_1,i_2,\ldots, i_\ell\} \subseteq \{1,2,\ldots,k\}$ 
be a set of tokens placed in the places $p_1$ to $p_\ell$ and used
for firing $t$.
Formally, $\place{}(i_j) = p_j$ for all $1 \leq j \leq \ell$. 
The move function $\move{\place}{\IN}{t} : \{1,2,\ldots,k\} \rightarrow P_\bot$ 
is now given by
\begin{align*}
 \move{\place}{\IN}{t}(i) = \begin{cases}
	\place{}(i)  & \text{if } i \notin \IN  \\
	p_j' & \text{if } i \in \IN \text{ such that } i = i_j \, . \\ 
\end{cases}
\end{align*}


Consider \cref{fig:pairing-example} and let $\place{} = \begin{bmatrix} p_1, p_2, \bot, \bot \end{bmatrix}$. Then 
$\move{\place{}}{\{1,2,3,4\}}{t} = \begin{bmatrix} p_4, p_3, p_5, p_6 \end{bmatrix}$.

\begin{definition}[Transition Enabledness]\label{def:tapn_concrete_enabledness}
A transition $t \in T$ is \emph{enabled} by a set of tokens $\IN \subseteq \{1,2, \ldots, k\}$ in a marking $\marking{}$ if 
\begin{enumerate}
\item[(i)] $\spre{t} = \{ \place{}(i) \mid i \in \IN\}$
\item[(ii)] $\age{}(i) \in c(\place{}(i),t)$ for all $i \in \IN$
\item[(iii)] $\mathit{Type}(\place{}(i),t) = \Transport{}$ implies 
$\age{}(i) \in \iota(move(\place{}, \IN, t)(i))$ for all $i \in \IN$
\item[(iv)] $(\place{}(i),t) \in \mathit{IA}$ implies 
  $\mathit{Type}(\place{}(i),t) \neq \Inhib$ for all
$i \in \{1,2,\ldots,k\} \setminus \IN$.
\end{enumerate}
\end{definition}
A transition $t$ is hence enabled if there is a token in each of its
input places \emph{(i)}, the ages of these tokens fit into the
intervals on the input arcs \emph{(ii)}, the age of the token that is moved
along a pair of transport arcs does not break the age invariant of the
place where is it moved to \emph{(iii)}, and there is no token in 
any place connected via inhibitor arc to the transition $t$ \emph{(iv)}.
\begin{definition}[Transition Firing]
A transition $t$ enabled in a marking $\marking{}$ by the  set of
tokens $\IN$ can \emph{fire}, 
producing a marking $(\move{\place}{\IN}{t}, \age{}\reset{R})$ where 
$R = \{ i \in \IN \mid \mathit{Type}(\place{}(i),t) \neq \Transport{} \}$.
\end{definition}
\begin{definition}[Time Delay]
A time delay of $d \in \Rpos$ time units is possible
in a marking $\marking{}$ if $\age{}(i) + d \in \iota(\place{}(i))$ 
for all $i \in \{1,2,\ldots, k\}$. By delaying $d$ time units, 
we reach the marking $(\place{}, \age{} + d)$.
\end{definition}

The \emph{concrete execution semantics} of a 
TAPN $N = \defTAPN$ is given by a 
TLTS $T(N) = (\mathcal{M}(N), T \cup \Rpos, \longrightarrow)$ 
where states are markings on $N$ and labels 
are transition names and time delays. 
The transition relation $\goes{}$ is defined so that $M \goes{t} M'$ 
if by firing $t$ in the marking $M$ we reach the marking $M'$,
and $M \goes{d} M'$ if by delaying $d$ time units in the marking $M$ 
we reach the marking $M'$. 

\section{Symbolic Semantics}
The concrete execution semantics is not suitable for the actual verification as
there are infinitely (in fact uncountably) many reachable markings. 
Therefore we give a symbolic semantics of $k$-bounded TAPNs with respect
to some given abstraction operator and show that the symbolic 
semantics preserves the answer to the reachability question.

A \emph{symbolic marking} of a $k$-bounded TAPN is a pair $(\place{}, \Val{})$ where $\place{} : \{1,2,\ldots,k\} \rightarrow P_{\bot}$ is a placement function and $\Val{} \subseteq \setOfVal$ is a set of valuations. 

In order to guarantee the finiteness of the state-space in the
abstract semantics, we consider abstraction operators that can
enlarge (extrapolate) the possible sets of valuations in symbolic markings. 
Instead of considering global abstraction operators like for example
in the timed automata theory 
(see e.g.~\cite{DBLP:journals/sttt/BehrmannBLP06}), our abstraction
operators depend also on the current placement.

\begin{definition}
An \emph{abstraction operator} is a function $\abstrname{} : [\{1,2,\ldots,k\} \rightarrow P_{\bot}] \times \powerset{\setOfVal} \goes{} \powerset{\setOfVal}$ 
such that $\Val{} \subseteq \abstr{}{\place{}, \Val{}}$
for all symbolic markings $(\place{}, \Val{})$.
\end{definition}

An example of an abstraction operator is 
the identity abstraction operator $\abstrname{\mathit{id}}$ 
where $\abstr{id}{\place{}, \Val{}} = \Val{}$ for all symbolic 
markings $(\place{},\Val{})$. 

Our aim is of course to find an operator that for a given
net abstracts as much
as possible. To do so, we use the function 
$\maxcnameI : \allIntervals \rightarrow \N_0$ that returns, for an interval $I$, the maximum constant different from $\infty$ appearing in $I$. 
Let $\maxgc$ be the maximum constant different from $\infty$ that appears 
in intervals or invariants of the given TAPN.
The function $\maxcname : P_{\bot} \rightarrow \N_0$ now returns, 
for each place $p$, the maximum constant appearing in the guards of 
outgoing arcs from $p$ or in the invariant of $p$; if there are transport
arcs connected to $p$, the constant is $\maxgc$. 
\begin{align*}
\maxc{p} = \begin{cases}
 \maxgc 
 \text{\ \ \ \ \ \ \ \ \ \ \ \ \ \ \ \ \ if 
 there exists $(p,t) \in \mathit{IA}$ s.t. 
  $\mathit{Type}(p, t) = \Transport{}$} \\
 \displaystyle\max\left( \maxcI{\iota(p)}, \max_{(p,t) \in \mathit{IA}}\left( \maxcI{c(p,t)} \right) \right)\text{ \ \ \ \ otherwise} .
 \end{cases}
\end{align*}\label{eqn:maxc}

Following \cite{DBLP:journals/sttt/BehrmannBLP06}, we proceed to define an equivalence on valuations. The addition in our paper
is that we take the placement function into account, 
thereby allowing for dynamic maximum constants.
Let $\place{}$ be a placement function and let $\age{}$ and $\age{}'$ be valuations. We write $\age{} \equiv_{\place{}} \age{}'$ if for all $i \in \C \setminus \{ \0 \}$
\begin{enumerate}
\item $\age{}(i) = \age{}'(i)$, or  
\item $\age{}(i) > \maxc{\place{}(i)}$ and $\age{}'(i) > \maxc{\place{}(i)}$.
\end{enumerate}

Hence two related valuations are indistinguishable from each other in the sense that they can be used to fire the same transitions. 
Now we can define an abstraction operator based on the relation above.
\begin{definition}
Let $\abstr{\equiv}{\place{},\Val{}} = \{ \age{}' \mid \age{}' \equiv_{\place{}} \age{} \text{ and } \age{} \in \Val{}\}$ for a set of valuations $\Val{} \subseteq \setOfVal$ and a placement function $\place{}$.
\end{definition}

Clearly, $\Val{} \subseteq \abstr{\equiv}{\place{},\Val{}}$ for any set of valuations $\Val{} \subseteq \setOfVal$ and any placement function $\place{}$
as the relation is reflexive.
For two abstraction operators $\abstrname{}$ and $\abstrnamep{}$
we write $\abstrname{} \abstrsubseteq \abstrnamep{}$ if 
$\abstr{}{\place{}, \Val{}} \subseteq \abstrp{}{\place{}, \Val{}}$ 
for all placement functions $\place{}$ and all $\Val{} \subseteq \setOfVal$.

We are now ready to give the symbolic semantics of TAPNs.
Let $g$ be a function that takes a placement function $\place{}$,
a set of tokens $\IN$ and a transition $t$ as its arguments
(assuming that $\spre{t} = \{\place{}(i) \mid i \in \IN\}$) and 
it returns the set of all valuations such that the tokens in $\IN$ 
satisfy all guards on the input arcs of $t$. 
Formally, $\g{\place}{\IN}{t} = \bigcap_{i \in \IN} \{\age{} \in \setOfVal 
\mid \age{}(i) \in c(\place{}(i), t)\}$. 
Similarly, we define a function $I$ that takes a placement function 
as its argument and returns the set of all valuations satisfying the
age invariants. Formally, 
$\inv{\place{}} = \bigcap_{i \in \C \setminus \{ \0 \}} 
\{\age{} \in \setOfVal \mid \age{}(i) \in \iota(\place{}(i))\}$.

\begin{definition}[Symbolic Semantics]\label{def:tapn_sym_semantics}
Let $(N, \marking{0})$ be a marked $k$-bounded TAPN and let $\abstrname{}$ be an abstraction operator. The symbolic semantics of $(N, \marking{0})$ is given by a TLTS $T(N) = (S, L, \symtrans{}_{\abstrname{}})$ where
\begin{itemize}
\item $S = [\{1,2,\ldots,k\} \rightarrow P_{\bot}] \times (\powerset{\setOfVal} \setminus \emptyset)$,
\item $L = T \cup \{\epsilon\}$, and
\item $\symMarking{} \symtrans{t}_{\abstrname{}} (\place{}', \abstr{}{\place{}', \Val{}'})$ if $t$ is a transition and  there is a set of tokens $\IN$ such that
\begin{itemize}
	\item $\spre{t} = \{\place{}(i) \mid i \in \IN\}$
	\item $\place{}' = \move{\place}{\IN}{t}$
	\item $\Val{}' \stackrel{\mathit{def}}{=} (\Val{} \cap \g{\place}{\IN}{t})\reset{R} \cap \inv{\place{}'}$ is consistent ($W'\not=\emptyset$)
where $R = \{ i \in \IN \mid \mathit{Type}(\place{}(i), t) \neq \Transport{} \}$
	\item 
$(\place{}(i),t) \in \mathit{IA}$ implies 
$\mathit{Type}(\place{}(i),t) \neq \Inhib$
for all $i \in \{ 1,\ldots,k \} \setminus \IN$
\end{itemize}
\item $\symMarking{} \symtrans{\epsilon}_{\abstrname{}} (\place{}, \abstr{}{\place{}, \Val{}\up \cap \inv{\place}})$.
\end{itemize}
The initial symbolic marking is $(\place{0}, \{\age{0}\})$ where $\age{0}(i) = 0$ for all $i \in \C$.
\end{definition}


Let us define $\symtrans{T}_{\abstrname{}} \stackrel{\mathit{def}}{=} 
\cup_{t \in T} \symtrans{t}_{\abstrname{}}$.
We can now state the main theorem of this section, which establishes soundness and completeness of the symbolic semantics for any abstraction operator
between $\abstrname{id}$ and $\abstrname{\equiv}$.
In fact, we allow to dynamically change the abstraction operators
during a computation in the symbolic semantics.
Hence we consider a new transition relation
$\symtrans{}_{\arbabstrsecond{\mathit{\equiv}}} \stackrel{\mathit{def}}{=} \bigcup_{\abstrname{\mathit{id}} \abstrsubseteq \abstrname{} \abstrsubseteq \abstrname{\mathit{\equiv}}} \symtrans{}_{\abstrname{}}$ allowing us to apply in any step
an arbitrary 
abstraction operator between the identity and $\abstrname{\equiv}$.

\begin{theorem}\label{thm:soundness_completeness}
Let $(N, (\place{0}, \age{0}))$ be a marked $k$-bounded TAPN. Then
\begin{itemize}
\item \emph{(Soundness)} $(\place{0}, \{\age{0}\}) \symtrans{}^{*}_{\arbabstr} (\place{}, \Val{})$ implies that there exists a valuation $\age{} \in \Val{}$ 
such that $\marking{0} \goes{}^{*} (\place{},\age{})$, and
\item \emph{(Completeness)} $\marking{0} \goes{}^* \marking{}$ implies,
for any abstraction operator $\abstrname{}$ where 
$\abstrname{id} \abstrsubseteq \abstrname{} \abstrsubseteq \abstrname{\equiv}$, 
that $(\place{0}, \{ \age{0} \}) \symtrans{\epsilon}_{\abstrname{}} \circ \, \, (\symtrans{T}_{\abstrname{\mathit{id}}} \circ \symtrans{\epsilon}_{\abstrname{}})^* \,\, (\place{}, \Val{})$ for some $\Val{}$ where $\age{} \in \Val{}$.
\end{itemize}
\end{theorem}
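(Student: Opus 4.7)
The plan is to prove soundness and completeness separately, each by induction on the length of the respective symbolic or concrete run, both resting on a common \textbf{key lemma}: the equivalence $\equiv_{\place{}}$ is a \emph{congruence} for firing and delay. Concretely, if $\age{} \equiv_{\place{}} \age{}'$, then (i) a transition $t$ is enabled with the same token set $\IN$ from $(\place{}, \age{})$ and from $(\place{}, \age{}')$, and the reset-images are $\equiv_{\place{}'}$-equivalent for the new placement $\place{}' = \move{\place}{\IN}{t}$; and (ii) if $\age{} + d \in \inv{\place{}}$ then $\age{}' + d \in \inv{\place{}}$ and $\age{}+d \equiv_{\place{}} \age{}'+d$. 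Both parts follow from the observation that $\maxc{\place{}(i)}$ dominates every constant appearing in the guards of arcs out of $\place{}(i)$ and in $\iota(\place{}(i))$, so $\equiv_{\place{}}$-equivalent valuations satisfy exactly the same interval-membership tests; for (i) the only nontrivial case is a token moved along a transport arc, where one uses the defining property $\maxc{\place{}(i)} = \maxgc$ to conclude that exceeding the old bound also exceeds the bound $\maxc{\place{}'(i)} \le \maxgc$ at the target place.

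For \emph{soundness}, I would strengthen the statement to the invariant that every $\age{}' \in \Val{}$ reached symbolically admits a concrete witness $\age{} \equiv_{\place{}} \age{}'$ with $\marking{0} \goes{}^{*} (\place{}, \age{})$. The base case is immediate. In the inductive step, a symbolic transition first performs the unabstracted set operation (intersect with guards, reset, intersect with invariants, or delay-then-intersect-with-invariants) and then applies some $\abstrname{}$ with $\abstrname{\mathit{id}} \abstrsubseteq \abstrname{} \abstrsubseteq \abstrname{\equiv}$, which enlarges the resulting set only by $\equiv_{\place{}'}$-equivalents. Given any $\age{}'$ in the output, I would trace it back through the abstraction to an element $\age{}''$ of the unabstracted image, apply the key lemma to lift the inductive concrete witness through the same action, and append one concrete step; transitivity of $\equiv_{\place{}'}$ closes the invariant. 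The existential conclusion of the theorem then follows by applying the invariant to any member of $\Val{}$.

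For \emph{completeness}, I would induct on the length of the concrete run, segmented into delay-fire blocks (a fire without a prior delay is preceded by a $0$-delay) so that the produced symbolic run matches the target shape $\symtrans{\epsilon}_{\abstrname{}} \circ (\symtrans{T}_{\abstrname{\mathit{id}}} \circ \symtrans{\epsilon}_{\abstrname{}})^{*}$. The maintained invariant is simply $\age{} \in \Val{}$ at every matched stage. Discrete steps use $\abstrname{\mathit{id}}$, so the symbolic operation coincides with the concrete intersect-reset-intersect on the witness and preserves its membership; delay steps enlarge the set only via the extensive $\abstrname{}$, which trivially preserves membership of the delayed concrete valuation. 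Since the argument only uses extensivity of $\abstrname{}$, it works uniformly for every fixed operator between $\abstrname{\mathit{id}}$ and $\abstrname{\equiv}$, as required.

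The main obstacle I anticipate is part (i) of the key lemma in the presence of the \emph{placement-dependent} maximum constants: after a firing, the token that was on $\place{}(i)$ may now be on a different place with a different $\maxc{\cdot}$, so one cannot blindly quote a static-constant argument. The resolution is the transport-arc case above, together with the trivial case of non-moved tokens (whose $\maxc{\cdot}$ is unchanged) and of reset tokens (whose new value is $0$). Once this congruence is secured, the two inductions reduce to routine case analyses over transition kinds ($\epsilon$ vs.\ $t \in T$) and, for soundness, over the sequence of abstraction operators applied along the run.
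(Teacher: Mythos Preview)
The paper does not include a proof of this theorem in the provided text (it is presumably deferred to the full version), so a direct comparison is not possible. Your overall architecture---a congruence lemma for $\equiv_{\place{}}$ followed by two inductions---is the natural one, and your identification of the transport-arc case as the delicate point in the key lemma is exactly right: the reset tokens and the non-moved tokens are trivial, and for a transported token $i$ the equality $\maxc{\place{}(i)}=\maxgc\ge\maxc{\place{}'(i)}$ is precisely what pushes the ``both above the bound'' clause through to the new placement. Your completeness argument is correct as stated.

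There is, however, a genuine gap in your soundness argument. Your strengthened invariant says: for every $\age{}'\in\Val{}$ there is a concretely reachable $\age{}$ with $\age{}\equiv_{\place{}}\age{}'$. From this you conclude ``the existential conclusion of the theorem then follows by applying the invariant to any member of $\Val{}$''. But the theorem requires a concretely reachable $\age{}\in\Val{}$, and your invariant only produces an $\equiv_{\place{}}$-equivalent witness, which need not lie in $\Val{}$ when the abstraction used along the run is strictly smaller than $\abstrname{\equiv}$ (so that $\Val{}$ is not $\equiv_{\place{}}$-closed). The fix is to strengthen the invariant further to: for every $\age{}'\in\Val{}$ there exists a concretely reachable $\age{}\in\Val{}$ with $\age{}\equiv_{\place{}}\age{}'$. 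This stronger invariant \emph{is} maintainable. In the $\symtrans{t}$ step, once you have a witness $u\in\Val{}$ with $u\equiv_{\place{}}w$ for the preimage $w\in\Val{}\cap\g{\place{}}{\IN}{t}$, your key lemma gives $u\in\g{\place{}}{\IN}{t}$ as well, hence $u\in\Val{}\cap g$, and then $u\reset{R}$ lands in the unabstracted successor $(\Val{}\cap g)\reset{R}\cap\inv{\place{}'}$ (the invariant intersection is handled exactly as you already argue), which is contained in the abstracted successor. The $\symtrans{\epsilon}$ step is analogous: the witness $u\in\Val{}$ delays to $u+d\in\Val{}\up\cap\inv{\place{}}$. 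With this adjustment your plan goes through.
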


Note that the completeness part of the theorem imposes that
the symbolic semantics can reach the given placement via a strictly alternating
sequence of time elapsing and transition firing steps where the transition 
firing steps are not extrapolated (using the identity abstraction 
operator); this reflects how the successors are computed in the 
reachability algorithm discussed in Section~\ref{sec:implementation}.

\section{Extrapolation via DBMs}\label{sec:ext-correctness}

For the use in our reachability algorithm, we need to
represent infinite sets of valuations $W$ in a finite way. However, it is
not known how to effectively deal directly with the $\abstrname{\equiv}$
abstraction operator. Instead, we suggest a slightly less general
abstraction (extrapolation) operator
and a way to finitely represent infinite sets of valuations
in order to guarantee a finite and effectively searchable
state-space of symbolic markings.

For this purpose we use Difference Bound Matrices (DBM),
a well-known technique for verification of real-time systems
(see e.g. \cite{DBLP:conf/avmfss/Dill89, DBLP:conf/ac/BengtssonY03})
that allows us to store constraints on
single clocks and on differences of two clocks in a compact matrix-based
data structure.

\begin{definition}[Difference Bound Matrix (DBM)]\label{def:dbm}
A \emph{Difference Bound Matrix} $D$ over the set of clocks $\C$ is a $|\C| \times |\C|$ matrix such that 
$$D_{ij} \in (\Z \times \{<, \leq\}) \cup \{(\infty, <)\}$$ 
where $i,j \in \C$ and for all $i \in \C$ we have 
\begin{enumerate}
\item if $D_{\0i} = (m, \lhd)$ then  $m \leq 0$, and $\lhd \in \{ <, \leq \}$, \label{itm:dbm_def_lb}
\item if $D_{i\0} = (m, \lhd)$ then $m \geq 0$, and $\lhd \in \{ <, \leq \}$, and \label{itm:dbm_def_ub}
\item $D_{ii} = (0, \leq)$. \label{itm:dbm_def_dc}
\end{enumerate}
\end{definition}
A \emph{solution} to a DBM $D$ is a valuation $v$ such that for all 
$i,j \in \C$ we have $v(i) - v(j) \lhd m$ where $D_{ij} = (m, \lhd)$.
The set of all solutions to a DBM $D$ (alternatively, the zone over $D$) 
is denoted by $[D]$.

We refer to the elements $D_{ij}$
as \emph{bounds}. 
A bound $D_{\0i} = (m,\lhd)$ where $m \leq 0$ 
(by Condition~\ref{itm:dbm_def_lb}) 
is called the \emph{lower bound} for the clock $i$.
Such a constraint means $v(\0) - v(i) \lhd m$ for any valuation $v \in [D]$, which is equivalent to 
$-m \lhd v(i)$.
Similarly, a bound $D_{i\0} = (m, \lhd)$ where $m \geq 0$ (by 
Condition~\ref{itm:dbm_def_ub}) 
is called the \emph{upper bound} for the clock $i$ and it 
means that $v(i) - v(\0) \lhd m$ which is the same as $v(i) \lhd m$. Finally, a bound $D_{ij}$ where $i \neq \0 \neq j$ is called a \emph{diagonal constraint}.

For notational convenience, we introduce an alternative notation 
$\lbname$ and $\ubname$ for the lower and upper bound of a clock $i$.
Formally, $\lb{D}{i} = (-m,\lhd)$ if $D_{\0i} = (m,\lhd)$ 
and $\ub{D}{i} = D_{i\0}$. We further define a notation for the 
individual elements in a bound such that $\lbc{D}{i} = m$ 
and $\lbop{D}{i} = \lhd$ if $\lb{D}{i} = (m,\lhd)$. We use the same notation 
$\ubcname_D$ and $\ubopname_D$ also for upper bounds.




A DBM $D$ is \emph{consistent} if $[D] \neq \emptyset$. 
We say that $D$ is in \emph{canonical form} 
if $D_{ij} \preceq D_{ik} + D_{kj}$ for all $i,j,k \in \C$. 
It is well known that
for every consistent DBM $D$ there is a unique canonical DBM $D^c$ 
such that $[D] = [D^c]$~\cite{DBLP:conf/avmfss/Dill89}.


We now define a variant of one of the abstraction (extrapolation) 
operators on DBMs 
in order to abstract sets of valuations represented by a DBM. The
definition is inspired by~\cite{DBLP:journals/sttt/BehrmannBLP06}, 
the main difference being the use of dynamic maximum constants in our 
operator.

\begin{definition}[Extrapolation]\label{def:ext_operator}
The extrapolation of a canonical DBM $D$ in a placement $\place{}$ is the DBM $D'$, called $\ext{D}$, and defined  as follows (here $i,j \in \C \setminus \{\0\}$ such that $i \neq j$):
\begin{enumerate}
\item $D' := D$
\item \label{itm:ext_def_lb_greater_than_c} if $\maxc{\place{}(i)} < \lbc{D}{i}$ then
$\lb{D'}{i} := (\maxc{\place{}(i)},<)$ and
$\ub{D'}{i} := (\infty,<)$
\item \label{itm:ext_def_ub_greater_than_c} if $\ubc{D}{i} > \maxc{\place{}(i)}$ then $\ub{D'}{i} := (\infty,<)$
\item \label{itm:ext-case4} if $\maxc{\place{}(i)} < \lbc{D}{i}$ or $\maxc{\place{}(j)} < \lbc{D}{j}$ then $D'_{ij} := (\infty,<)$
\item \label{itm:ext-case5} if $D_{ij} = (m,\lhd)$ and $m > \maxc{\place{}(i)}$ then $D'_{ij} := (\infty,<)$
\end{enumerate}
\end{definition}

Intuitively, the extrapolation works by removing all upper bounds greater 
than the maximum constant of a given place and by replacing any lower bound 
greater than the maximum constant with the value $(\maxc{\place{}(i)},<)$. 
Additionally, whenever the lower bound is above the maximum constant 
of a given place, any diagonal constraint involving that clock are also 
removed. An example of a DBM $D$ and its extrapolation $\ext{D}$
together with their graphical representations (clock~1 is on
the x-axis and clock~2 on the y-axis) is given in 
Figure~\ref{fig:extrapolation_example}. We can see that the extrapolation
operator enlarges the set of valuations represented by $D$ such that 
there are only finitely many extrapolated DBMs.
\begin{figure}[tp]
\centering
\subfloat[][Canonical DBM $D$]{
\scalebox{0.9}{
\begin{tabular}[b]{c|c|c}
$(0, \leq)$ \PADROW & \PADROW $(-1, \leq)$ \PADROW & \PADROW $(-3, \leq)$ \\
\hline
$(5, \leq)$ \PADROW & \PADROW $(0, \leq)$ \PADROW & \PADROW $(1, \leq)$ \\
\hline
$(6, \leq)$ \PADROW & \PADROW $(3, \leq)$ \PADROW & \PADROW $(0, \leq)$
\end{tabular}
\label{fig:extrapolation_example_original}
}
}\quad\quad
\subfloat[][The DBM $\ext{D}$]{
\scalebox{0.9}{
\begin{tabular}[b]{c|c|c}
$(0, \leq)$ \PADROW & \PADROW $(-1, \leq)$ \PADROW & \PADROW $(-2, <)$ \\
\hline
$(\infty, <)$ \PADROW & \PADROW $(0, \leq)$ \PADROW & \PADROW $(\infty, <)$ \\
\hline
$(\infty, <)$ \PADROW & \PADROW $(\infty, <)$ \PADROW & \PADROW $(0, \leq)$
\end{tabular}
\label{fig:extrapolation_example_ext_D}
}}\\
\subfloat[][The zone ${[D]}$]{
\begin{tikzpicture}[scale=0.35]
    \draw[help lines] (0,0) grid (7,7);
    \draw[axes] (0,0) -- (8,0);
	\draw[axes] (0,0) -- (0,8); 
 	\filldraw[fill=gray!50, draw=black,thick] (1,3) -- (1,4) -- (3,6) -- 
	(5,6) -- (5,4) -- (4,3) -- (1,3);
	\node at (4, -1) {};
	\node at (-1, 4) {};
\end{tikzpicture}
\label{fig:extrapolation_example_original_zone}
}\quad\quad
\subfloat[][The zone ${[\ext{D}]}$]
{
\begin{tikzpicture}[scale=0.35]
      \draw[help lines] (0,0) grid (7,7);
    \draw[axes] (0,0) -- (8,0);
	\draw[axes] (0,0) -- (0,8); 
	\filldraw[fill=gray!50,draw=gray!50] (1,2) -- (1,7) -- (7,7) -- (7,2) -- (1,2); 
	\draw[draw=black, thick] (1,7) -- (1,2); 
	\draw[draw=black, dotted, thick] (1,3) -- (1,4) -- (3,6) -- (5,6) -- (5,4) -- (4,3) -- (1,3); 
	\draw[draw=black,dashed, thick] (1,-0.3) -- (1,7.3);
	\draw[draw=black,dashed, thick] (-0.3,2) -- (7.3,2);
	\node at (1, -1) {$\maxc{\place{}(1)}$};
	\node at (-2.5, 2) {$\maxc{\place{}(2)}$};
\end{tikzpicture}
\label{fig:extrapolation_example_ext_D_zone}
}
\caption{Example of the extrapolation operator for $\maxc{\place{}(1)} = 1, \maxc{\place{}(2)} = 2$}\label{fig:extrapolation_example}
\end{figure}
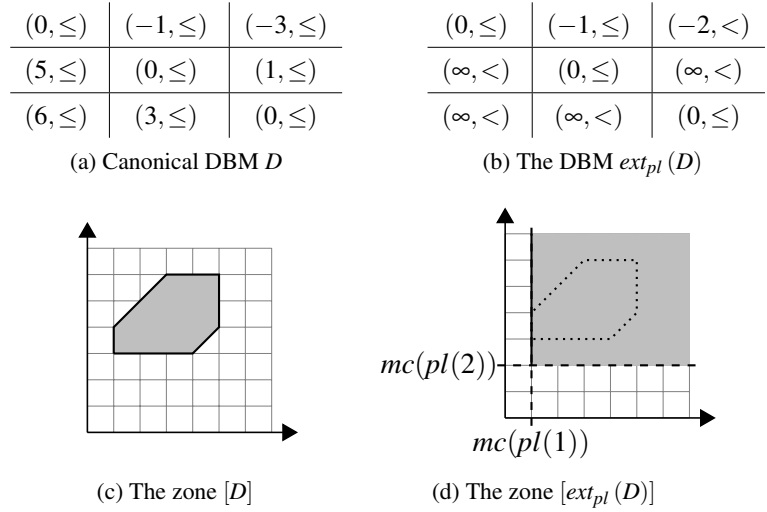



\begin{lemma}\label{lem:finite_number_of_extrapolated_dbms}
The set 
$\{\ext{D} \mid D \text{ is a canonical DBM}\}$
is finite.
\end{lemma}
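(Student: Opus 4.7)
The plan is to show that after applying the extrapolation operator, every entry of the resulting DBM is forced into a finite set of possible values, and since a DBM over $\C$ has only $|\C|^2$ entries, this immediately yields finiteness.

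First, I would fix the global constant $\maxgc$ of the net, recall that $\maxc{p} \leq \maxgc$ for every $p \in P_\bot$, and observe that since the placement $\place{}$ ranges over the finite domain $[\{1,\ldots,k\} \to P_\bot]$, it suffices to prove that for each fixed placement $\place{}$ the set $\{\mathit{ext}_{\place{}}(D) \mid D \text{ canonical DBM}\}$ is finite; the overall set is then a finite union of finite sets.

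Next I would do a case analysis on each entry $D'_{ij}$ of $\ext{D}$, using \cref{def:ext_operator}. For a lower bound $\lb{D'}{i}$: either Condition~\ref{itm:ext_def_lb_greater_than_c} fires and $\lb{D'}{i} = (\maxc{\place{}(i)},<)$, giving exactly one value, or else it is preserved from $D$, and by the negation of the firing condition we have $-\lbc{D'}{i} = \lbc{D}{i} \leq \maxc{\place{}(i)} \leq \maxgc$, while by Definition~\ref{def:dbm}, Condition~\ref{itm:dbm_def_lb}, we also have $\lbc{D'}{i} \geq 0$. This leaves only finitely many integer values with either $<$ or $\leq$. For an upper bound $\ub{D'}{i}$: either Condition~\ref{itm:ext_def_ub_greater_than_c} or Condition~\ref{itm:ext_def_lb_greater_than_c} applies, giving $(\infty,<)$, or else $0 \leq \ubc{D'}{i} \leq \maxc{\place{}(i)} \leq \maxgc$, again finitely many choices. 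For a diagonal entry $D'_{ij}$ with $i,j \neq \0$: either Conditions~\ref{itm:ext-case4} or~\ref{itm:ext-case5} apply, producing $(\infty,<)$, or it is preserved and Condition~\ref{itm:ext-case5} being negated forces the integer part to satisfy $-\maxgc \leq D_{ij} \leq \maxc{\place{}(i)} \leq \maxgc$ (the lower inequality follows from canonicity, $D_{ij} \succeq -D_{ji}$, and the bound on preserved upper bounds of clock $j$). So here, too, only finitely many values remain. Finally, the diagonal entries $D'_{ii} = (0,\leq)$ are fixed by Definition~\ref{def:dbm}.

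Putting everything together, each of the $|\C|^2$ entries of $\ext{D}$ lives in a finite set whose size depends only on $\maxgc$, $\place{}$, and the syntactic constraints of \cref{def:dbm}; hence the set of all possible extrapolated matrices for a fixed $\place{}$ is finite, and ranging over the finitely many placements preserves finiteness. The main subtlety, and the only step that requires a bit of care, is the lower bound on the preserved diagonal entries $D'_{ij}$: this is where I would invoke canonicity of $D$ together with the surviving upper bound of clock $j$ to produce a uniform lower bound in terms of $\maxgc$, so that the set of admissible integer values is genuinely finite rather than merely bounded from above.
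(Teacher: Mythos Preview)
Your overall strategy---bounding each entry of the extrapolated matrix by a finite set of values and concluding finiteness from the finite number of entries---is correct and is the natural approach; the paper does not give a proof of this lemma in the main text, so there is nothing further to compare against.

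There is, however, a genuine gap in the step you yourself flag as the main subtlety. To bound a preserved diagonal entry $D'_{ij}$ from below you invoke the relation $D_{ij} \succeq -D_{ji}$ together with ``the bound on preserved upper bounds of clock $j$''. This does not go through: the fact that $D'_{ij}$ is preserved does \emph{not} force $D'_{ji}$ or $\ub{D'}{j}$ to be preserved. In particular, $\ubc{D}{j}$ may exceed $\maxc{\place{}(j)}$ (so Condition~\ref{itm:ext_def_ub_greater_than_c} sets $\ub{D'}{j}=(\infty,<)$) and $m_{ji}$ may exceed $\maxc{\place{}(j)}$ (so Condition~\ref{itm:ext-case5} sets $D'_{ji}=(\infty,<)$), while $D'_{ij}$ is still preserved. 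In that situation $m_{ji}$ is unbounded in $D$, and the inequality $m_{ij} \geq -m_{ji}$ yields no uniform lower bound on $m_{ij}$.

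The fix uses a different canonicity inequality, namely $D_{\0j} \preceq D_{\0i} + D_{ij}$, which gives $m_{ij} \geq \lbc{D}{i} - \lbc{D}{j}$. Now the very fact that Condition~\ref{itm:ext-case4} did \emph{not} fire (a prerequisite for $D'_{ij}$ being preserved) ensures $\lbc{D}{j} \leq \maxc{\place{}(j)} \leq \maxgc$, and $\lbc{D}{i} \geq 0$ always, so $m_{ij} \geq -\maxgc$. In short, the lower bound on preserved diagonals comes from the \emph{lower} bounds of the two clocks---controlled precisely by the negation of Condition~\ref{itm:ext-case4}---not from the upper bound of clock~$j$ or from $D_{ji}$. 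With this correction your argument is complete.
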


We can now conclude with the main result stating
that the extrapolation provides an abstraction which is between
identity and $\abstrname{\equiv}$; a crucial and nontrivial fact needed for
proving correctness of the reachability algorithm.

\bigskip
\begin{theorem}
\label{lem:extrapolation_below_abstraction_equiv}
Let $D$ be a canonical DBM and let $\place{}$ be a placement function. 
Then $[D] \subseteq [\ext{D}] \subseteq \abstr{\equiv}{\place{}, [D]}$.
\end{theorem}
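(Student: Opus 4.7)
The first inclusion $[D]\subseteq[\ext{D}]$ follows by inspecting the five rewrite cases of \cref{def:ext_operator} one by one and checking that each modification only \emph{weakens} (or leaves unchanged) the affected bound. Cases~3, 4 and~5 replace a bound by the trivial bound $(\infty,<)$; case~2 replaces $\lb{D}{i}$ by $(\maxc{\place{}(i)},<)$, which is weaker precisely because $\maxc{\place{}(i)}<\lbc{D}{i}$, and simultaneously removes the upper bound of $i$. Hence every valuation satisfying $D$ also satisfies $\ext{D}$.

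For the main inclusion $[\ext{D}]\subseteq\abstr{\equiv}{\place{},[D]}$, fix $v'\in[\ext{D}]$ and partition the non-zero clocks as
\[
A=\{i\in\C\setminus\{\0\}:v'(i)\le\maxc{\place{}(i)}\},\qquad
B=\{i\in\C\setminus\{\0\}:v'(i)>\maxc{\place{}(i)}\}.
\]
Define $v(i):=v'(i)$ for $i\in A$. To obtain $v\equiv_{\place{}}v'$ we must still choose $v(i)>\maxc{\place{}(i)}$ for $i\in B$ so that the resulting valuation lies in $[D]$. Two short observations drive the construction. \emph{First}, for $i\in A$ case~2 cannot have fired (else $v'(i)>\maxc{\place{}(i)}$), so $\lb{\ext{D}}{i}=\lb{D}{i}$ and the lower bound of $D$ at $i$ is met by $v(i)=v'(i)$; the upper bound is met because either case~3 fired (and then $\ubc{D}{i}>\maxc{\place{}(i)}\ge v'(i)$) or it did not (and then $v'(i)\le\ubc{\ext{D}}{i}=\ubc{D}{i}$). \emph{Second}, for $i\in B$ case~3 \emph{must} have fired, for otherwise $v'(i)\le\ubc{\ext{D}}{i}=\ubc{D}{i}\le\maxc{\place{}(i)}$; in particular the interval $(\maxc{\place{}(i)},\ubc{D}{i}]$ from which we can pick $v(i)$ is non-empty.

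The remaining and main task is to verify that the values $v(i)$ for $i\in B$ can be chosen \emph{simultaneously} so that every diagonal constraint $D_{ij}$ of $D$ is satisfied. I plan a case split on whether $i,j$ lie in $A$ or $B$. For $i,j\in A$ the constraint $D_{ij}$ is either kept by the extrapolation, in which case $v'$ already satisfies it, or it was removed by case~5, meaning $D_{ij}=(m,\lhd)$ with $m>\maxc{\place{}(i)}\ge v(i)$, so $v(i)-v(j)\le v(i)<m$ holds strictly. For the mixed and the $B$-$B$ cases I rely on the canonicity of $D$: the triangle inequalities $D_{ij}\preceq D_{i\0}+D_{\0j}$ and $D_{j\0}\preceq D_{ji}+D_{i\0}$ imply that picking each $v(i)$ for $i\in B$ sufficiently close to $\ubc{D}{i}$ (while keeping $v(i)>\maxc{\place{}(i)}$, which is possible by the second observation) simultaneously satisfies all diagonals involving a $B$-clock; in the cases where the diagonal was actually removed by case~4 or case~5, a direct numerical bound of the form $v(i)-v(j)<m$ is obtained exactly as for the $A$-$A$ case, using $v(i)\le\ubc{D}{i}$ on the large side and $v(j)\ge\max(\maxc{\place{}(j)},\lbc{D}{j})$ on the small side.

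The main obstacle will be the simultaneous feasibility step in this third paragraph. This is where canonicity is essential and where the dynamic nature of $\maxc{\place{}(\cdot)}$ (which depends on the current placement rather than on a fixed global constant) has to be tracked carefully; once that is handled uniformly, the argument mirrors the classical timed-automata proof of~\cite{DBLP:journals/sttt/BehrmannBLP06}, yielding a concrete $v\in[D]$ with $v\equiv_{\place{}}v'$, and hence $v'\in\abstr{\equiv}{\place{},[D]}$.
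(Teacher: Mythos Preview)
The paper does not include a proof of this theorem in the main text (it is deferred, presumably to the full version), so I assess your plan directly. Your overall scheme---partition clocks into $A=\{i:v'(i)\le\maxc{\place{}(i)}\}$ and $B=\{i:v'(i)>\maxc{\place{}(i)}\}$, keep $v(i)=v'(i)$ on~$A$, choose values on~$B$---is the standard route, and your treatment of the first inclusion and of the $A$--$A$ diagonals is fine.

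The gap is the construction on $B$: picking $v(i)$ ``close to $\ubc{D}{i}$'' is wrong, and the triangle inequality $D_{ij}\preceq D_{i\0}+D_{\0j}$ points the wrong way. Take two clocks with $\maxc{\place{}(1)}=5$, $\maxc{\place{}(2)}=100$, and the canonical DBM $D$ with $\lb{D}{1}=\lb{D}{2}=(0,\le)$, $\ub{D}{1}=\ub{D}{2}=(10,\le)$, $D_{12}=(5,\le)$, $D_{21}=(10,\le)$. Extrapolation removes only $\ub{D}{1}$; neither case~4 nor case~5 touches $D_{12}$ (note $5\not>5$). Then $v'=(8,3)\in[\ext{D}]$ with $1\in B$, $2\in A$; your recipe sets $v(2)=3$, $v(1)\approx 10$, giving $v(1)-v(2)=7>5$, which violates $D_{12}$. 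The triangle inequality you invoke only yields $m_{ij}\le\ubc{D}{i}-\lbc{D}{j}$, an \emph{upper} bound on $m_{ij}$; it cannot be combined with $v(i)-v(j)\le\ubc{D}{i}-\lbc{D}{j}$ to conclude $v(i)-v(j)\le m_{ij}$. The real issue is that an unmodified diagonal $D_{ij}$ with $j\in A$ ties $v(i)$ to the \emph{actual} value $v'(j)$, not just to $\lbc{D}{j}$, so pushing $v(i)$ toward $\ubc{D}{i}$ ignores this coupling.

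What does work is to stay near $v'(i)$ rather than near $\ubc{D}{i}$. Observe that if case~2 fires for $i$ then case~4 erases every diagonal touching $i$, so any value in the (non-empty) interval determined by $\lb{D}{i},\ub{D}{i}$ suffices. If case~2 does not fire, then $\lbc{D}{i}\le\maxc{\place{}(i)}<v'(i)$, and for every \emph{surviving} diagonal $D_{ij}$ or $D_{ji}$ the valuation $v'$ itself already witnesses consistency; only $\ub{D}{i}$ may be violated by $v'(i)$, and capping at (just below) $\ubc{D}{i}$ handles that while canonicity ($D_{j\0}\preceq D_{ji}+D_{i\0}$, hence $v'(j)-\ubc{D}{i}\preceq m_{ji}$) takes care of the $D_{ji}$-side. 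The $B$--$B$ diagonals then follow from canonicity once the $A\cup\{\0\}$ constraints are met. You identified the hard step correctly; the resolution needs to be replaced along these lines.
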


\section{Monotonicity of Bounded TAPNs}\label{sec:discrete-inclusion}
It is a well-known fact that the behaviour of the basic TAPN model
is monotonic~\cite{DBLP:journals/tcs/FinkelS01}
with respect to the standard marking inclusion,
intuitively meaning that adding more tokens to the net does not restrict 
its behaviour.
However, the use of age invariants and inhibitor arcs breaks the monotonicity
property~\cite{memics_paper}. 
In this section, we introduce a more refined inclusion relation 
on symbolic markings that preserves monotonicity
even in the presence of age invariants and inhibitor arcs.
Moreover, the inclusion relation allows for reordering of tokens
in the net and hence it implements the symmetry reduction.
The inclusion relation is then exploited in the reachability algorithm
presented in Section~\ref{sec:implementation}.


Let us fix a marked $k$-bounded TAPN $(N, (\place{0}, \age{0}))$. 
For a place $p \in P$, we define a boolean predicate 
$\untimed{p} \equiv 
(\iota(p) = [0,\infty)) \wedge \forall t \in \postset{p} \, . \, (\mathit{Type}(p,t) \neq \Transport{} \wedge c(p,t) = [0, \infty))$. If the predicate is true,
we do not need to keep track of the ages of tokens in this place.
For a symbolic marking $M$ we now define the set $\INC{M}$ 
representing the set of tokens eligible for the inclusion checking. 

\begin{definition}\label{def:inc}
Let $M = (\place{}, \Val{})$ be a symbolic marking. We define $\INC{M} \subseteq \{1,2,\ldots,k\}$ as the largest subset of tokens such that for any token $i \in \INC{M}$, 
\begin{enumerate}
\item \label{def:inc-bottom-condition}    $\place{}(i) \neq \bot$,
\item \label{def:inc-invariant-condition} $\iota(\place{}(i)) = [0, \infty)$,
\item \label{def:inc-inhib-condition}     $\place{}(i)$ has no outgoing inhibitor arcs, and
\item \label{def:inc-bound-condition}     either $\untimed{\place{}(i)}$ or 
	\begin{itemize}
	\item $\inf(V_i) \in V_i \Rightarrow \maxc{\place{}(i)} < \inf(V_i)$, or
	\item $\inf(V_i) \notin V_i \Rightarrow \maxc{\place{}(i)} \leq \inf(V_i)$,
	\end{itemize}
	where $V_i = \{ \age{}(i) \mid \age{} \in \Val{}\}$.
\end{enumerate}
\end{definition}
Let us briefly comment on Condition~\ref{def:inc-bound-condition}. 
If a place is untimed then the ages of tokens in that place are irrelevant 
and we can consider them for inclusion checking.
Otherwise, the lower bound of clock $i$ in $\Val{}$ is calculated 
by $\inf(V_i)$ 
and the two subconditions distinguish whether this bound is included or 
not\footnote{In the DBM representation we can read these
bounds directly from the matrix.}.
The point is that if the lower-bound for the token $i$ is above
the maximum constant for the place where $i$ is placed, then its concrete
age is irrelevant for the firing of transitions.



Let $\setOfIncPlaces \subseteq P$ be a set of places that
we want to consider for the inclusion checking (typically we set
$\setOfIncPlaces = P$ but the user can restrict
some places from the application of the inclusion operator by excluding
them from $\setOfIncPlaces$). 
We can now partition all tokens in the marking $M = (\place{}, \Val{})$ 
into three categories
\begin{itemize}
\item $\inc{M} = \INC{M} \cap \{i \mid \place{}(i) \in \setOfIncPlaces\}$
\item $\BOT{M} = \{i \mid \place{}(i) = \bot\}$
\item $\eq{M} = \{1,2,\ldots,k\} \setminus (\BOT{M} \cup \inc{M})$
\end{itemize}
where $\inc{M}$ contains all tokens eligible for inclusion checking,
$\BOT{M}$ contains all unused tokens and $\eq{M}$ is the set of all
tokens that have to be checked for equality.
Let us now introduce some notation. 
Let $\place{}$ be a placement function, $p$ a place and 
let $\setoftokens \subseteq \{1,2,\ldots,k\}$ be a set of tokens. 
We define $\cnt{\place{}}{\setoftokens}{p}$ = $|\{i \in \setoftokens \mid \place{}(i) = p\}|$. Intuitively, $\cnt{\place{}}{\setoftokens}{p}$ tells us how many tokens from $X$ are in the place $p$.
We are now ready to introduce the refined ordering relation.

\begin{definition}[Inclusion Ordering]\label{def:inclusion-op}
Let $M = (\place{}, \Val{})$ and $M' = (\place{}', \Val{}')$ be  
symbolic markings. We say that $M$ is included in $M'$, written $M \markingsubseteq{} M'$, if 
\begin{enumerate}
\item \label{def:inclusion-op-bijection-condition} There exists a bijection $h : \eq{M} \rightarrow \eq{M'}$ such that 
	\begin{enumerate}
	\item \label{def:inclusion-op-eq-places}  $\place{}(i) = \place{}'(h(i))$ for all $i \in \eq{M}$,
	\item \label{def:inclusion-op-valuations} for all $\age{} \in \Val{}$ there exists a $\age{}' \in \Val{}'$ such that for all $i \in \eq{M}$
	\begin{enumerate}
	\item[(i)] $\age{}(i) = \age{}'(h(i))$, or
	\item[(ii)] $\age{}(i) > \maxc{\place{}(i)}$ and $\age{}'(h(i)) > \maxc{\place{}'(h(i))}$,
	 \end{enumerate}
	\end{enumerate}
\item \label{def:inclusion-op-inc-count} $\cnt{\place{}}{\inc{M}}{p} \leq \cnt{\place{}'}{\inc{M'}}{p}$ for all $p \in P$.
\end{enumerate}
\end{definition}

Hence two symbolic markings $M$ and $M'$ are related by $\markingsubseteq{}$, 
if they agree on the sets $\eq{M}$ and $\eq{M'}$ via the bijection $h$ 
(this gives us the possibility to employ symmetry reduction), and moreover, 
the number of tokens in any 
place $p$ from the set $\inc{M'}$ in the marking $M'$ 
must be larger than 
or equal to the number of tokens in the place $p$ in the marking $M$.
We finish this section by a theorem proving monotonicity with respect
to the ordering relation $\markingsubseteq{}$ for any 
abstraction operator below $\abstrname{\equiv}$.

\begin{theorem}\label{thm:monotonicity_theorem}
Let $\abstrname{}$ be an abstraction operator such that 
$\abstrname{id} \subseteq \abstrname{} \subseteq \abstrname{\equiv}$ 
and $M_1, M_2 \in \mathcal{M}_{\abstrname{}}(N, (\place{0}, \{\age{0}\}))$ 
be reachable symbolic markings such that $M_1 \markingsubseteq{} M_2$. 
If $M_1 \symtrans{}_{\abstrname{}} M_1'$ then 
$M_2 \symtrans{}_{\abstrname{}} M_2'$ for some $M_2'$ such that
$M_1' \markingsubseteq{} M_2'$.
\end{theorem}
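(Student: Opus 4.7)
I would proceed by a case analysis on the type of the step $M_1 \symtrans{}_{\abstrname{}} M_1'$: either it is a delay step $M_1 \symtrans{\epsilon}_{\abstrname{}} M_1'$ or a discrete step $M_1 \symtrans{t}_{\abstrname{}} M_1'$ for some transition $t \in T$. Let $M_1 = (\place{1}, \Val{1})$, $M_2 = (\place{2}, \Val{2})$ and let $h : \eq{M_1} \to \eq{M_2}$ be the bijection from Definition of $\markingsubseteq{}$.

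\emph{Delay case.} Here $M_1' = (\place{1}, \abstr{}{\place{1}, \Val{1}\up \cap \inv{\place{1}}})$, so the placement is unchanged. I would set $M_2' = (\place{2}, \abstr{}{\place{2}, \Val{2}\up \cap \inv{\place{2}}})$ and show it is reachable by an $\epsilon$-step. The key observation is that every token $i \in \inc{M_2}$ sits in a place with invariant $[0,\infty)$ (Condition 2 of Definition of $\INC{}$), so the invariants in $\inv{\place{2}}$ restrict only tokens in $\eq{M_2} \cup \BOT{M_2}$, and for the former these restrictions mirror those in $\inv{\place{1}}$ under $h$ by Condition 1(a) of $\markingsubseteq{}$. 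Since placements are preserved, $\eq{M_1'} = \eq{M_1}$, $\inc{M_1'} = \inc{M_1}$, and likewise for $M_2'$; thus the same $h$ and count condition are inherited. Condition 1(b) lifts from $\Val{1}, \Val{2}$ to $\Val{1}\up, \Val{2}\up$ by adding the same delay $d$ to both sides of each clause, then survives the intersection with invariants (because $\age{}$ satisfies the invariants of $\eq{}$-tokens iff its counterpart $\age{}'$ does, by Condition 1(a)), and finally survives the abstraction $\abstrname{}$ because $\abstrname{} \abstrsubseteq \abstrname{\equiv}$ and the clauses of 1(b) coincide with the clauses defining $\equiv_{\place{}}$.

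\emph{Discrete case.} Let $\IN_1$ be the witnessing set of tokens. I would split $\IN_1 = \IN_1^{\mathit{eq}} \uplus \IN_1^{\mathit{inc}} \uplus \IN_1^{\bot}$ according to whether each token is in $\eq{M_1}$, $\inc{M_1}$, or $\BOT{M_1}$. Define $\IN_2$ by taking $h(\IN_1^{\mathit{eq}})$, choosing for each $i \in \IN_1^{\mathit{inc}}$ a fresh token of $\inc{M_2}$ in the same place $\place{1}(i)$ (possible by the count inequality), and choosing for each $i \in \IN_1^{\bot}$ a fresh token of $\BOT{M_2}$ (the count of $\bot$-placed tokens in $M_2$ is at least that in $M_1$, since total counts agree and the counts in other places only grow). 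I would then verify the four enabledness conditions of Definition of Symbolic Semantics:
\emph{(i)} placements match by construction; \emph{(ii)} for $\eq{}$-tokens the guard on ages transfers through Condition 1(b), and for tokens in $\inc{M_1}$ whose age lies above $\maxc{\place{1}(i)}$, the guard interval must be of the form $[a,\infty)$ with $a \le \maxc{\place{1}(i)}$ (otherwise $t$ could not have fired in $M_1$), and the corresponding token in $\inc{M_2}$ has age above $\maxc{\place{2}(h(i))} = \maxc{\place{1}(i)}$ and hence also satisfies the guard; \emph{(iii)} the transport-invariant check for $\inc{}$-tokens forces the target invariant to be $[0,\infty)$, which is always satisfied; \emph{(iv)} the inhibitor condition for $M_2$ holds because the only tokens of $M_2$ not accounted for by $h$ or by the count on $\inc{}$ are themselves in $\inc{M_2}$, and Condition 3 of Definition of $\INC{}$ forbids outgoing inhibitor arcs from their places.

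\emph{Re-establishing $\markingsubseteq{}$ on successors.} Let $M_1' = (\place{1}', \abstr{}{\place{1}', \Val{1}'})$ and analogously $M_2'$, with $\place{i}' = \move{\place{i}}{\IN_i}{t}$. I would construct $h' : \eq{M_1'} \to \eq{M_2'}$ by starting from $h$ restricted to tokens that remain in $\eq{}$ after firing, then extending: tokens moved out of $\eq{M_1}$ by firing and re-landing in $\eq{M_1'}$ are paired with the corresponding tokens moved under $\IN_2$ (same pairing edges in $\mathit{Pairing}(t)$), and similarly tokens that move \emph{into} $\eq{M_1'}$ from $\inc{M_1}$ or $\BOT{M_1}$ are paired with their $\IN_2$-counterparts. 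The remaining tokens of $M_2'$ not in the image of $h'$ must end up in $\inc{M_2'} \cup \BOT{M_2'}$; a careful counting argument using Condition 2 of $\markingsubseteq{}$ for $M_1, M_2$ shows that the count condition is preserved for $M_1', M_2'$. Condition 1(b) for $M_1', M_2'$ follows by taking any $\age{} \in \Val{1}'$, tracing it back through the reset and guard-intersection to some $\age{}_0 \in \Val{1}$, obtaining the matching $\age{}_0' \in \Val{2}$ from the hypothesis, then pushing forward through the same reset/intersection to get $\age{}' \in \Val{2}'$; clauses (i) and (ii) of 1(b) survive because reset sends both sides to $0$ and the bound $\maxc{\place{1}'(i)}$ equals $\maxc{\place{2}'(h'(i))}$ when placements agree. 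The final abstraction step is absorbed as in the delay case.

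\emph{Main obstacle.} The hardest part is the bookkeeping around how tokens change category (between $\eq{}$, $\inc{}$, $\BOT{}$) when $t$ fires: tokens in $\inc{M_1}$ may land in a place that forces them into $\eq{M_1'}$ (e.g.\ the new place has a nontrivial invariant or the ages have been reset to $0$), and conversely, tokens in $\eq{M_1}$ may land in a place making them eligible for $\inc{M_1'}$. The construction of $h'$ and the verification that the count condition for $\inc{M_i'}$ is preserved hinge on a token-by-token matching via the pairing function $\mathit{Pairing}(t)$ and a careful case distinction on where each token starts and ends; this is delicate because $\INC{M_i'}$ is defined with respect to $\Val{i}'$, so verifying Condition 4 of Definition of $\INC{}$ requires using that resets send the relevant clock to $0$ and that the valuation correspondence is symmetric above the maximum constant.
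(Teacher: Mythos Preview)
The paper does not include a proof of this theorem in the provided source (it is presumably deferred to the full version), so there is no reference proof to compare against directly. Your overall strategy---case split on delay versus discrete step, build $\IN_2$ from $\IN_1$ by applying $h$ to $\eq{}$-tokens and picking fresh witnesses for $\inc{}$- and $\bot$-tokens, then reconstruct the bijection on the successors---is the natural one and most of your local arguments (guards on $\inc{}$-tokens forcing unbounded intervals, inhibitor safety via Condition~3 of $\INC{}$, transport targets having trivial invariants) are correct. There is, however, one concrete error that breaks the discrete case as written.

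\medskip
\noindent\textbf{The $\bot$-token inequality is reversed.} You claim ``the count of $\bot$-placed tokens in $M_2$ is at least that in $M_1$, since total counts agree and the counts in other places only grow''. In fact the opposite holds: from the bijection $h$ we have $|\eq{M_1}|=|\eq{M_2}|$, and summing Condition~2 of $\markingsubseteq{}$ over all places gives $|\inc{M_1}|\le|\inc{M_2}|$, so
\[
|\BOT{M_1}| \;=\; k-|\eq{M_1}|-|\inc{M_1}| \;\ge\; k-|\eq{M_2}|-|\inc{M_2}| \;=\; |\BOT{M_2}|.
\]
Thus $M_2$ may have strictly \emph{fewer} $\bot$-tokens than $M_1$, and your construction of $\IN_2^{\bot}$ is not justified by the reason you give.

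\medskip
\noindent\textbf{How to repair it.} The missing ingredient is the one hypothesis you never invoke: $N$ is $k$-bounded and $M_2$ is a \emph{reachable} symbolic marking. Once your arguments for the $\eq{}$- and $\inc{}$-parts of $\IN_2$ (together with the inhibitor check) establish that $t$ is enabled at $M_2$ in the underlying Petri-net sense, the resulting marking is again reachable and hence has at most $k$ real tokens. This forces
\[
|\BOT{M_2}| \;=\; k - |\{\,i \mid \place{2}(i)\neq\bot\,\}| \;\ge\; |\postset{t}|-|\preset{t}|
\]
whenever $|\postset{t}|>|\preset{t}|$, which is exactly the number of $\bot$-slots that $\spre{t}$ demands. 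Without this boundedness argument the theorem is in fact false: take an unbounded net with a token-creating transition $t$ and a separate transition $u$ that can fill all $k$ slots with $\inc{}$-eligible tokens; one then obtains reachable $M_1\markingsubseteq{} M_2$ with $\BOT{M_2}=\emptyset$, so $t$ fires from $M_1$ but not from $M_2$ in the $\bot$-encoding. So the reachability and $k$-boundedness hypotheses are not decorative---they are precisely what rescues the $\bot$-token step.
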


\section{Implementation of the Reachability Algorithm}\label{sec:implementation}
Before we present the reachability algorithm, let us first introduce a reachability fragment of 
CTL that is used in the algorithm. A formula of the logic is given by the abstract syntax: 
\begin{align}
\phi &::= \EF \psi \mid \AG \psi \nonumber\\
\psi,\psi_1,\psi_2 &::= (p \bowtie n) \mid 
\psi_1 \wedge \psi_2 \mid \psi_1 \vee \psi_2 \label{eqn:psi}
\end{align}
where $p \in P$, $n \in \Nzero$ and
$\bowtie \; \in \{<,\leq, =, \not=, \geq, >\}$.

The semantics of formulae is given in terms of a TLTS $(S, \lab, \goes{})$ and a labeling function $\mu : S \rightarrow 2^\AP$ assigning sets of true atomic propositions to states. 
We define the set of atomic propositions $\AP$ and the labeling function $\mu$ as $\AP \stackrel{\mathit{def}}{=} \{ (p \bowtie n) \mid p \in P, n \in \N_0 \text{ and } 
\bowtie \; \in \{ <, \leq, =, \not=, \geq, > \}\}$ and $\mu(M) \stackrel{\mathit{def}}{=} \{ (p \bowtie n) \mid \cnt{\place{}}{\{1,2,\ldots,k\}}{p} \bowtie n \text{ and } \bowtie\; \in \{<, \leq, =, \not=, \geq, > \}\}$. The intuition is that a proposition $(p \bowtie n)$ is true in a marking $M$ if the number of tokens in the place $p$ satisfies the proposition with respect to $n$. Since atomic propositions only depend on the discrete part of a marking, we adopt the same definition of $\mu$ for symbolic markings. For a state $s \in S$ and a formula $\phi$, we define the satisfaction relation $s \models \phi$ inductively as follows:
\begin{alignat*}{2}
s &\models (p \bowtie n)		&&\text{ iff } (p \bowtie n) \in \mu(s) \\
s &\models \neg \psi			&&\text{ iff } s \not\models \psi \\
s &\models \psi_1 \wedge \psi_2		&&\text{ iff } s \models \psi_1 \text{ and } s \models \psi_2 \\
s &\models \psi_1 \vee \psi_2 		&&\text{ iff } s \models \psi_1 \text{ or } s \models \psi_2 \\
s &\models \EF \psi			&&\text{ iff } s \goes{}^* s' \text{ and } s' \models \psi \\
s &\models \AG \psi			&&\text{ iff } s \not\models \EF \neg \psi \quad .\\
\end{alignat*}


As the $\AG$ and $\EF$ temporal operators are dual, it is enough to
design an algorithm for deciding the validity of $\EF \psi$.
Note that because the predicates do not allow us to query the ages
of tokens in the net, the presence of age invariants in the TAPN model
adds an expressive power (otherwise we could conjunct the age invariants
with the intervals on input arcs and add to the formulae the requirement that
no place contains any token exceeding the invariant bound).

We say that a place $p$ in a boolean predicate $\psi$
defined according to \cref{eqn:psi} is \emph{monotonicity-breaking}
if $\psi$ contains an atomic proposition of the form
$p < n$, $p \leq n$, $p = n$ or $p \not= n$. In other words, the 
inequality imposes some upper bound or an exact comparison to a
concrete number in the place $p$.

\begin{lemma}\label{lem:inc_empty_all_properties}
Let $M$ and $M'$ be symbolic markings and let $\psi$ be a boolean predicate 
defined by \cref{eqn:psi} and let
the set $\setOfIncPlaces$ of inclusion places do not contain any
monotonicity-breaking place. 
If $M \models \psi$ and $M \markingsubseteq{} M'$ then $M' \models \psi$.
\end{lemma}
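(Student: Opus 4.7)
The plan is to proceed by structural induction on the boolean predicate $\psi$, with the main work concentrated in the atomic base case where the hypothesis on $\setOfIncPlaces$ is actually used.

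For the base case $\psi = (p \bowtie n)$, I would start by decomposing the total token count in each marking according to the partition of Definition~\ref{def:inc}:
\begin{align*}
\cnt{\place{}}{\{1,\ldots,k\}}{p} &= \cnt{\place{}}{\inc{M}}{p} + \cnt{\place{}}{\eq{M}}{p}, \\
\cnt{\place{}'}{\{1,\ldots,k\}}{p} &= \cnt{\place{}'}{\inc{M'}}{p} + \cnt{\place{}'}{\eq{M'}}{p},
\end{align*}
using the fact that tokens counted in $\BOT{M}$ sit at $\bot \neq p$ and hence contribute nothing. By Condition~\ref{def:inclusion-op-eq-places} of \cref{def:inclusion-op}, the bijection $h$ preserves placements, so $\cnt{\place{}}{\eq{M}}{p} = \cnt{\place{}'}{\eq{M'}}{p}$; and by Condition~\ref{def:inclusion-op-inc-count}, $\cnt{\place{}}{\inc{M}}{p} \leq \cnt{\place{}'}{\inc{M'}}{p}$. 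This already yields $\cnt{\place{}}{\{1,\ldots,k\}}{p} \leq \cnt{\place{}'}{\{1,\ldots,k\}}{p}$, which immediately handles the monotonic comparisons $\bowtie\; \in \{\geq, >\}$.

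The key point is the case of the monotonicity-breaking comparisons $\bowtie\; \in \{<,\leq,=,\not=\}$. Here the assumption enters: $p \notin \setOfIncPlaces$, so by the definition $\inc{M} = \INC{M} \cap \{i \mid \place{}(i) \in \setOfIncPlaces\}$ we have $\cnt{\place{}}{\inc{M}}{p} = 0$, and analogously $\cnt{\place{}'}{\inc{M'}}{p} = 0$. Substituting into the two decompositions above together with the bijection equality gives
\[
\cnt{\place{}}{\{1,\ldots,k\}}{p} = \cnt{\place{}}{\eq{M}}{p} = \cnt{\place{}'}{\eq{M'}}{p} = \cnt{\place{}'}{\{1,\ldots,k\}}{p},
\]
so the counts agree exactly and any of the four remaining comparisons with $n$ is trivially preserved.

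The inductive step is then routine: for $\psi = \psi_1 \wedge \psi_2$, if $M \models \psi_1$ and $M \models \psi_2$, the induction hypothesis applied to each conjunct gives $M' \models \psi_1$ and $M' \models \psi_2$; the disjunctive case is analogous. The main obstacle I anticipate is not the induction itself but verifying carefully that every token located at a monotonicity-breaking place $p$ really falls into $\eq{M}$ (i.e.\ that the $\BOT{M}$ and $\inc{M}$ parts contribute zero to the count at $p$) — this is where one has to unwind Definition~\ref{def:inc} (Condition~\ref{def:inc-bottom-condition} for $\BOT{M}$) and the restriction of $\inc{M}$ to $\setOfIncPlaces$; the valuation clause (Condition~\ref{def:inclusion-op-valuations}) of \cref{def:inclusion-op} is not needed because atomic propositions depend only on the discrete part of a marking.
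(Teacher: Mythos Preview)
Your proposal is correct and follows essentially the same approach as the paper: structural induction on $\psi$ with the atomic case handled via the bijection $h$ on $\eq{M}$ and the exclusion of monotonicity-breaking places from $\setOfIncPlaces$. Your upfront decomposition of the total count into the $\inc{}$ and $\eq{}$ parts is in fact slightly cleaner than the paper's case split (which tacitly treats all tokens at $p$ as lying in the same class), but the underlying argument is the same.
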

\begin{proof}
By structural induction on $\psi$. The induction step
is trivial; we discuss here only the base case for a proposition of the formi
$\psi = p \bowtie n$.
Let $(\place{1},\Val{1})$ and $(\place{2},\Val{2})$ be symbolic markings 
such that $(\place{1},\Val{1}) \markingsubseteq{} (\place{2},\Val{2})$.
Let $(\place{1},\Val{1}) \models \psi$. 

If $p$ is a monotonicity-breaking place than $p \not\in \setOfIncPlaces$
and all tokens in the place $p$ belong to the set $\eq{(\place{1},\Val{1})}$.
By 
Condition \ref{def:inclusion-op-bijection-condition} of the inclusion
ordering
there exists a bijection $h$ such that 
for all $i \in \eq{(\place{1},\Val{1})}$ 
we have $\place{1}(i) = \place{2}(h(i))$ and hence
in the marking $(\place{2},\Val{2})$
the number of tokens in the place $p$ is equal to the number
of tokens in the place $p$ in the marking $(\place{1},\Val{1})$ and
we get $(\place{2},\Val{2})  \models \psi$.

If $p$ is not a monotonicity-breaking place, the constraint on $p$
has the form $p \geq n$ or $p > n$.
If the tokens in the place $p$ belong to $\eq{(\place{1},\Val{1})}$
we are done by the arguments as above.
If the tokens in the place $p$ belong to $\inc{(\place{1},\Val{1})}$
then by 
Condition \ref{def:inclusion-op-inc-count}
of the inclusion ordering 
the number of tokens placed in $p$ in the marking
$(\place{2},\Val{2})$ is at least the number of tokens in the marking 
$(\place{1},\Val{1})$ and because the proposition on $p$ states only
a lower-bound, we can again conclude that $(\place{2},\Val{2})  \models \psi$.
\qed
\end{proof}


In order to present an efficient reachability algorithm, we need a finite representation for the potentially infinite sets of valuations discussed in \cref{sec:discrete-inclusion}. We will thus use DBMs. However, we have to implement the operations used on the sets of valuations, such as delay, clock reset and intersection, on DBMs. Similarly, we need to define a DBM which represents a guard or invariant of the form $i \in \slbracket a, b \srbracket$ where $i$ is a clock and $\slbracket a, b \srbracket$ is a well-formed interval---here 
$\slbracket$ is either closed or open left parenthesis and
similarly for $\srbracket$.

\begin{proposition}
Let $D_1$ and $D_2$ be canonical DBMs over the clocks $\C$. Then the following operations and DBMs can be computed efficiently
\begin{enumerate}
\item (Delay) $D_1\up$ is a canonical DBM s.t. $[D_1\up] = [D_1]\up$.
\item (Reset) $D_1\reset{R}$ where $R \subseteq \C$ is a canonical DBM s.t. $[D_1\reset{R}] = [D_1]\reset{R}$.
\item (Intersection) $D_1 \cap D_2$ is a canonical DBM s.t. $[D_1 \cap D_2] = [D_1] \cap [D_2]$.
\item (Interval DBM) $D_{i \in \slbracket a,b \srbracket}$ is a canonical DBM s.t. $[D_{i \in \slbracket a,b \srbracket}] = \{ v \in \setOfVal \mid v(i) \in \slbracket a,b \srbracket \}$.
\item (Discrete Inclusion) Let $\place{1}$ and $\place{2}$ be placement functions. The expression $(\place{1}, [D_1]) \markingsubseteq{} (\place{2}, [D_2])$ can be computed efficiently.
\end{enumerate}
\end{proposition}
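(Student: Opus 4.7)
The plan is to dispatch each item separately, since items 1--4 are essentially folklore DBM manipulations while item 5 requires a small algorithmic argument tailored to \cref{def:inclusion-op}. For all five items I will argue correctness (that the returned DBM represents the required zone, or that the Boolean check agrees with the definition) and then bound the work by a polynomial in $|\C|$.

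For the first four items I would follow the standard DBM machinery (see \cite{DBLP:conf/avmfss/Dill89, DBLP:conf/ac/BengtssonY03}). Delay is obtained by setting $D_1\up_{i\0} := (\infty, <)$ for every $i \in \C\setminus\{\0\}$ and leaving all other entries unchanged; the result is already canonical because relaxing only the upper bounds cannot violate triangle inequalities. Reset is obtained by setting, for each $i \in R$, $D_{\0 i} = D_{i\0} = (0,\leq)$, then re-tightening the row and column of $i$ with a single pass. Intersection takes the pointwise minimum $(D_1 \cap D_2)_{ij} = \min(D_{1,ij},D_{2,ij})$, followed by a Floyd--Warshall canonicalization in $O(|\C|^3)$; consistency of the result is detected by a negative entry on the diagonal. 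The interval DBM $D_{i\in\slbracket a,b\srbracket}$ is constructed by writing $\lb{D}{i}$ and $\ub{D}{i}$ directly from the endpoints $a$ and $b$ (with the strict/non-strict flag dictated by $\slbracket$ and $\srbracket$), setting $D_{\0i}:=(0,\leq)$ in the case $\srbracket = \infty$, and canonicalizing. In each case correctness is immediate from the definition of $[D]$ and the characterization of $\Val{}\up$, $\Val{}\reset{R}$ and intersection of zones.

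Item 5 is the interesting one and I would split it into the three clauses of \cref{def:inclusion-op}. Computing $\BOT{M}$, $\inc{M}$, $\eq{M}$ from a symbolic marking stored as $(\place{},D)$ is linear in $k$: Condition~\ref{def:inc-bound-condition} of \cref{def:inc} reduces to reading $\lb{D}{i}$ and comparing it to $\maxc{\place{}(i)}$. Clause~\ref{def:inclusion-op-inc-count} of the ordering is then a direct count per place. For clause~\ref{def:inclusion-op-bijection-condition} I would first pick any bijection $h:\eq{M}\to\eq{M'}$ that respects placements: group tokens by place and pair them off arbitrarily, rejecting if the per-place counts on $\eq{M}$ and $\eq{M'}$ differ. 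Permuting the indices of $D_2$ along $h$ reduces the remaining valuation check to a zone containment question on a common index set.

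The main obstacle is the valuation condition~\ref{def:inclusion-op-valuations}(b), which is not plain inclusion $[D_1]\subseteq [D_2]$ but inclusion \emph{modulo} the ``both ages above the maximum constant'' escape clause. My plan is to reduce this to a standard inclusion test by preprocessing: define $D_1^{\star}$ from $D_1$ by replacing, for every $i\in\eq{M}$ with $\lbc{D_2}{i}>\maxc{\place{}(i)}$, the lower bound $\lb{D_1^\star}{i}$ with $(\maxc{\place{}(i)},<)$ and the upper bound with $(\infty,<)$ (mirroring \cref{def:ext_operator}), and symmetrically for columns in which $D_2$ is already relaxed past the maximum constant. Then clause~\ref{def:inclusion-op-valuations}(b) holds under $h$ exactly when $[D_1^{\star}] \subseteq [D_2]$, which with both DBMs in canonical form is the pointwise test $D_1^{\star}{}_{ij} \preceq D_{2,ij}$ for every $i,j$ and takes $O(|\C|^2)$ time. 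The subtle point I would need to justify carefully is that the choice of $h$ among place-respecting bijections is irrelevant for this test, which follows because tokens sitting in the same place have the same $\maxc{}$ value and the preprocessing depends only on $(\place{},\maxc{\place{}(\cdot)})$; thus any convenient $h$ can be fixed a priori and the whole inclusion check runs in time polynomial in $k$ and $|\C|$.
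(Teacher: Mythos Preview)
Your handling of items 1--4 is adequate and in line with the paper, which itself merely cites the standard DBM references for those four operations and defers item~5 to its full version.

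The genuine gap is in item~5, specifically the claim that ``the choice of $h$ among place-respecting bijections is irrelevant.'' This is false. Take $\eq{M}=\eq{M'}=\{1,2\}$ with both tokens placed in some $p$ with $\maxc{p}=5$, let $[D_1]$ contain only the valuation $\age{}(1)=3,\ \age{}(2)=4$, and let $[D_2]$ contain only $\age{}'(1)=3,\ \age{}'(2)=4$. With $h=\mathrm{id}$ Condition~\ref{def:inclusion-op-valuations} of \cref{def:inclusion-op} is satisfied, but with $h$ the transposition it fails (no coordinate matches and none exceeds $5$). Your preprocessing of $D_1$ into $D_1^\star$ may depend only on the $\maxcname$-values, but the permuted $D_2$ you test against certainly depends on $h$, so fixing $h$ arbitrarily can yield the wrong answer. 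The ordering $\markingsubseteq{}$ is an \emph{existential} over $h$; an efficient decision procedure therefore has to search over place-respecting bijections (or cast the problem as a matching), and this is precisely where the nontrivial algorithmic work for item~5 lies. You have not addressed it.

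A secondary issue: even for a fixed $h$, the reduction of Condition~\ref{def:inclusion-op-valuations} to a single inclusion $[D_1^\star]\subseteq[D_2]$ is too coarse. For a clock $i$ with $\lbc{D_2}{h(i)}\leq\maxc{\place{}(i)}<\ubc{D_2}{h(i)}$ both alternatives of~\ref{def:inclusion-op-valuations}(b) may be available depending on which witness $\age{}'\in[D_2]$ is chosen, so relaxing only those coordinates where $D_2$'s lower bound is already above $\maxcname$ does not capture the escape clause in general.
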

All these operations can be efficiently implemented for 
DBMs (see e.g. \cite{pettersson_phd, DBLP:conf/ac/BengtssonY03})
for details on operations 1--4; the fifth operator can be implemented using 
DBMs, as showed in the full version of the paper.


We can now perform a standard search through the state-space of symbolic
markings using the passed/waiting list approach. We start by
adding the initial marking to the waiting list. As long as the waiting
list is nonempty, a symbolic marking $M$ is removed from the waiting list,
added to the passed list, and all symbolic extrapolated successors of $M$
are explored. If a successor $M'$ of $M$ is below (w.r.t. the
ordering $\markingsubseteq{}$) some marking on the passed or waiting
list, we ignore it. Otherwise we add $M'$ to the waiting list and
remove from the waiting and passed lists all markings that are below $M'$.
We stop with a positive answer once we find a marking satisfying a property
we are searching for.
If the whole state-space is searched without finding such a marking, we
return a negative answer.
The search is performed only upto $k$ tokens in the net where this
number is supplied by the user (it is undecidable whether
there is some $k$ such that the net is $k$-bounded~\cite{memics_paper}).
If the net is $k$-bounded for the given $k$ (this can be automatically
verified) then this gives
a conclusive answer, otherwise the search can give a conclusive
answer only if it finds a marking satisfying the given property.

\begin{algorithm}[p]
\textbf{Name: } \textbf{succ($N, (\place{},D)$)}\;
\KwIn{A $k$-bounded TAPN $N$ and a symbolic marking $(\place{}, D)$.}
\KwOut{The set of successor markings for $(\place{}, D)$.}
\Begin{
	successors := $\emptyset$\;
	\ForAll{$t \in T$}{
		Let $\Delta := \{ i \in \{1, 2, \ldots,k \} \mid \place{}(i) \in \spre{t} \}$\;
		\ForAll{sets $\IN \subseteq \Delta$ where $\spre{t} = \{ \place{}(i) \mid i \in \IN\}$ and $\forall i \in \{ 1,2,\ldots,k \} \setminus \IN \, . \, (\place{}(i),t) \in \mathit{IA} \Rightarrow \mathit{Type}(\place{}(i),t) \neq \Inhib$}{
			Let $R := \{ i \in \IN \mid \mathit{Type}(\place{}(i), t) \neq \Transport{} \}$\;
			$\place{}'$ := move(\place{}, \IN, $t$)\;
			$D' := \left(D \cap \bigcap_{i \in \IN} D_{i\, \in \, c(\place{}(i),t)}\right)\reset{R} \cap \bigcap_{i\, \in \, \{1,\ldots,k\}} D_{i \in \iota(\place{}'(i))}$\;
			\If{$D'$ is consistent}{
			successors := successors $\cup \, \left\{ \left(\place{}', \extplace{\place{}'}{(D')\up  \cap \bigcap_{i \in \{1,2,\ldots,k\}} D_{i \in \iota(\place{}'(i))}}^c \right) \right\}$\;
			}
		}
	}
	\Return successors\;
}
\caption{Successor generation algorithm\label{alg:simple_successor_generation}}  
\end{algorithm}

\begin{algorithm}[p]
\textbf{Name:} \textbf{Reach($N, (\place{0}, \age{0}), \EF \psi$)}\;
\KwIn{A marked $k$-bounded TAPN $(N, (\place{0}, \age{0}))$, a formula 
$\EF \psi$ and a set $\setOfIncPlaces \subseteq P$ not containing any
monotonicity-breaking place in $\psi$.}
\KwOut{YES if $(\place{0}, \age{0}) \models \EF \psi$, NO otherwise.}
\Begin{ 
	$\mathit{PASSED} := \emptyset$\;
	Create DBM $D_0$ such that $[D_0] = \{ \age{0} \}$\;
	\lIf{$(\place{0}, \extplace{\place{0}}{D_0\up \cap \bigcap_{i \in \{1,2,\ldots,k\}} D_{i \in \iota(\place{0}(i))}}^c) \models \psi$}{ \Return YES\; }
	$\mathit{WAITING} := \{(\place{0}, \extplace{\place{0}}{D_0\up \cap \bigcap_{i \in \{1,2,\ldots,k\}} D_{i \in \iota(\place{0}(i))}}^c)\}$\;

	\While{$\mathit{WAITING} \neq \emptyset$}{
		Remove some $(\place{},D)$ from $\mathit{WAITING}$\;		
		$\mathit{PASSED} := \mathit{PASSED} \cup \{(\place{},D)\}$\;
	
			\ForAll{$(\place{}',D') \in \text{succ}(N, (\place{},D))$}
			{
				\If{\nllabel{alg_line:discrete_inclusion_subset}$\neg\exists (\place{}'',D'') \in \mathit{PASSED} \cup \mathit{WAITING} \, . \, (\place{}',[D']) \markingsubseteq{} (\place{}'',[D''])$}
	{				
						$\mathit{PASSED} := \mathit{PASSED} \setminus \{ (\place{}'',D'') \in \mathit{PASSED} \mid (\place{}'',[D'']) \markingsubseteq{} (\place{}',[D'])\}$\nllabel{alg_line:discrete_inclusion_superset_start}\;
						$\mathit{WAITING} := \mathit{WAITING} \setminus \{(\place{}'',D'') \in \mathit{WAITING} \mid (\place{}'',[D'']) \markingsubseteq{} (\place{}',[D'])\}$\; \nllabel{alg_line:discrete_inclusion_superset_end}
					\lIf{$(\place{}',D') \models \psi$}{ \Return YES\; }
					$\mathit{WAITING} := \mathit{WAITING} \cup \{ (\place{}',D') \}$\;
				}
			}
		}		
	\Return NO\;
}
\caption{Reachability algorithm\label{alg:reachability}}  
\end{algorithm}
The successor generation algorithm is presented in 
Algorithm~\ref{alg:simple_successor_generation} and the reachability 
algorithm is given in Algorithm~\ref{alg:reachability}.
Observe, as remarked above, that the algorithm will discard any generated successor marking if a larger marking is already present in the PASSED or WAITING list (line \ref{alg_line:discrete_inclusion_subset}). Similarly, if a generated successor marking is larger than some marking in the PASSED or WAITING list, then it will remove all such smaller markings from the PASSED and WAITING list (lines \ref{alg_line:discrete_inclusion_superset_start} to \ref{alg_line:discrete_inclusion_superset_end}). 


\begin{lemma}\label{lem:algorithm_termination}
Algorithm \ref{alg:reachability} terminates.
\end{lemma}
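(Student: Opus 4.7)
The plan is to show termination by bounding the number of insertions into $\mathit{WAITING}$ and observing that each iteration of the while loop either terminates or removes exactly one element from it.

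First I would argue that the universe of symbolic markings that can arise during the execution is finite. The placement functions $\place{} : \{1,\ldots,k\} \to P_\bot$ form a finite set. Every DBM stored by the algorithm has already been passed through the extrapolation operator $\mathit{ext}_{\place{}'}$ (both in the initialisation of $\mathit{WAITING}$ and when successors are generated in Algorithm~\ref{alg:simple_successor_generation}), and by \cref{lem:finite_number_of_extrapolated_dbms} only finitely many such extrapolated DBMs exist. Hence the set $S$ of symbolic markings that can ever be stored is finite. I would also record that $\markingsubseteq{}$ is a quasi-order on $S$: reflexivity uses the identity bijection in \cref{def:inclusion-op}, and transitivity follows by composing the witnessing bijections $h$ and the clause on clocks above $\maxc{\cdot}$.

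The key lemma I would prove is that \emph{no symbolic marking is inserted into $\mathit{WAITING}$ more than once during a run}. Suppose towards a contradiction that $(\place{}',D')$ is inserted at two distinct steps $n_1 < n_2$. Since the check on line~\ref{alg_line:discrete_inclusion_subset} must succeed at step $n_2$ and $\markingsubseteq{}$ is reflexive, $(\place{}',D')$ cannot be in $\mathit{PASSED} \cup \mathit{WAITING}$ at step $n_2$. It must therefore have been evicted somewhere in $(n_1, n_2)$ by lines~\ref{alg_line:discrete_inclusion_superset_start}--\ref{alg_line:discrete_inclusion_superset_end}, so some marking $N_1$ with $(\place{}',D') \markingsubseteq{} N_1$ was inserted. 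If $N_1$ has in turn been evicted, the same reasoning yields $N_2$ with $N_1 \markingsubseteq{} N_2$, and so on. Only finitely many events occur in $(n_1, n_2)$, so the chain terminates at some $N_\ell$ still present in $\mathit{PASSED} \cup \mathit{WAITING}$ at step $n_2$. Transitivity then gives $(\place{}',D') \markingsubseteq{} N_\ell$, contradicting the check at step $n_2$.

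With the no-repetition property in hand, the total number of insertions into $\mathit{WAITING}$ over the whole run is bounded by $|S|$, and since each iteration of the while loop removes one element from $\mathit{WAITING}$ (while generating only finitely many successors), the number of iterations is finite. I expect the main obstacle to be exactly the no-repetition claim, because an evicted marking might in principle be regenerated as a successor arbitrarily later in the run; it is transitivity of $\markingsubseteq{}$ together with finiteness of $S$ that rules out any unbounded eviction--reinsertion cascade. \qed
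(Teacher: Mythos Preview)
Your proof is correct and follows the same overall approach as the paper: finiteness of the extrapolated symbolic state space combined with the observation that each symbolic marking is added to $\mathit{WAITING}$ at most once. The paper simply asserts the ``at most once'' property without argument, whereas you supply the missing justification via the eviction-chain and transitivity reasoning; so your version is a strict elaboration of the paper's proof rather than a different route.
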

\begin{proof}
Let $N$ be a $k$-bounded TAPN. We must argue that the state-space of the symbolic semantics is finite. Since $N$ is a $k$-bounded TAPN, it follows that there are only finitely many placement functions. Further, from \cref{lem:finite_number_of_extrapolated_dbms} we know that there are only finitely many extrapolated DBMs for a given placement function. Thus, we may conclude that there are only finitely many symbolic markings in the symbolic semantics using the extrapolation operator. Observe that Algorithm \ref{alg:reachability} will add each symbolic marking to the $\mathit{WAITING}$ list at most once. Thus, it follows that the algorithm terminates.
\qed
\end{proof}

\begin{lemma}\label{lem:algorithm_yes_implies_initial_marking_satisfies_property}
If Algorithm \ref{alg:reachability} returns "YES", then $(\place{0},\age{0}) \models \EF \psi$.
\end{lemma}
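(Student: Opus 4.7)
The plan is to reduce to \cref{thm:soundness_completeness}: when the algorithm returns YES, I will exhibit a symbolic marking satisfying $\psi$ that is reachable in the symbolic semantics from the initial symbolic marking, and then use the soundness half of that theorem to lift the witness back to the concrete semantics.

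The crucial preliminary lemma I would prove first is the following invariant of the main loop: for every symbolic marking $(\place{},D)$ ever inserted into the WAITING list,
\[
(\place{0},\{\age{0}\}) \;\; \symtrans{}^{*}_{\arbabstr} \;\; (\place{},[D]).
\]
This is established by induction on the order of insertion. The base case is the single insertion on line~5 of Algorithm~\ref{alg:reachability}, where the explicit extrapolated delay step matches one $\symtrans{\epsilon}$ transition of the symbolic semantics. For the inductive step, every later inserted $(\place{}',D')$ is produced by the \emph{succ} routine from some $(\place{},D)$ that was previously in WAITING (hence symbolically reachable by induction), and Algorithm~\ref{alg:simple_successor_generation} performs precisely one discrete transition without extrapolation followed by a delay with extrapolation, matching a $\symtrans{T}_{\abstrname{\mathit{id}}} \circ \symtrans{\epsilon}_{\arbabstr}$ step within the generalized relation of \cref{thm:soundness_completeness}. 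The pruning of PASSED and WAITING via $\markingsubseteq{}$ on lines~\ref{alg_line:discrete_inclusion_subset}--\ref{alg_line:discrete_inclusion_superset_end} is irrelevant for this direction because symbolic reachability is a property of the semantics itself, not of the search bookkeeping.

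With the invariant in hand, the two branches that return YES are handled separately. If the initial extrapolated marking already satisfies $\psi$, then because the labelling function $\mu$ depends only on the placement component of a marking, $(\place{0},\age{0})\models \psi$ holds as well, and hence $(\place{0},\age{0})\models \EF\psi$ trivially. Otherwise YES is triggered by some generated successor $(\place{}',D')$ satisfying $\psi$; extending the invariant by one more symbolic step gives $(\place{0},\{\age{0}\}) \symtrans{}^{*}_{\arbabstr} (\place{}',[D'])$. The soundness part of \cref{thm:soundness_completeness} then supplies a concrete valuation $\age{}' \in [D']$ with $(\place{0},\age{0}) \goes{}^{*} (\place{}',\age{}')$, and the same placement-only observation about $\mu$ yields $(\place{}',\age{}')\models \psi$, so $(\place{0},\age{0})\models \EF \psi$.

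The step requiring the most care will be verifying the inductive step of the invariant, namely that \cref{alg:simple_successor_generation} really implements exactly one $\symtrans{T}_{\abstrname{\mathit{id}}} \circ \symtrans{\epsilon}_{\arbabstr}$ step, i.e.\ that the intersection with guards, the reset on $R$, the intersection with invariants of the new placement, the delay, and the extrapolation are assembled in the correct order and with the correct placement functions so as to match \cref{def:tapn_sym_semantics} and \cref{def:ext_operator}. Beyond this somewhat tedious bound-by-bound check, the argument is routine: pruning is benign in the soundness direction because it can only shrink the explored state space and never introduces spurious paths.
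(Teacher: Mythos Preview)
Your proposal is correct and follows essentially the same route as the paper: establish that any symbolic marking the algorithm discovers is reachable in the symbolic semantics under an abstraction bounded by $\abstrname{\equiv}$, invoke the soundness half of \cref{thm:soundness_completeness} to obtain a concrete witness, and use that $\mu$ depends only on the placement. Your write-up is more explicit than the paper's (you spell out the WAITING invariant by induction and treat the two YES branches separately), whereas the paper collapses this into a single informal appeal to ``the way the algorithm searches through the state-space''; one point you leave implicit that the paper makes explicit is the appeal to \cref{lem:extrapolation_below_abstraction_equiv} to justify that the extrapolation step lies inside the $\arbabstr$ relation.
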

\begin{proof}
Assume that Algorithm \ref{alg:reachability} returns "YES". We must show that $(\place{0},\age{0}) \goes{}^* (\place{},\age{})$ such that $(\place{}, \age{}) \models \psi$. We define $\abstrname{\mathit{ext}}(\place{}, [D]) \stackrel{\text{def}}{=} [\ext{D}]$ for any placement function $\place{}$ and canonical DBM $D$ (for any set of valuations $\Val{}$ that cannot be represented by a DBM we assume $\abstr{\mathit{ext}}{\place{},\Val{}} \stackrel{\text{def}}{=} \Val{}$).

Since the algorithm returned "YES", it must have found some symbolic marking $(\place{}, D)$ such that $(\place{},D) \models \psi$. Observe that Algorithm \ref{alg:reachability} (and Algorithm \ref{alg:simple_successor_generation}) will alternate between using the identity abstraction for discrete transition firings and $\abstrname{\mathit{ext}}$ for time delays. Thus, from the way the algorithm searches through the state-space, we may conclude that there must exist symbolic markings $(\place{1}, D_1), (\place{2},D_2), \ldots, (\place{},D)$ such that
\begin{align*}
(\place{0}, [D_0]) \symtrans{}_{\arbabstrsecond{\mathit{ext}}} (\place{1},[D_1]) \symtrans{}_{\arbabstrsecond{\mathit{ext}}} \cdots \symtrans{}_{\arbabstrsecond{\mathit{ext}}} (\place{}, [D])
\end{align*}
where $[D_0] = \{ v_0 \}$. By \cref{lem:extrapolation_below_abstraction_equiv} and \cref{thm:soundness_completeness}, we have that $\symtrans{}_{\abstrname{\mathit{ext}}}$ is sound. Thus, there exists a concrete marking $(\place{}, \age{})$ such that $(\place{0}, \age{0}) \goes{}^* (\place{},\age{})$ and $\age{} \in [D]$. Since atomic propositions depend only on the discrete part of a marking (placement function), it follows that $(\place{},\age{}) \models \psi$.\qed
\end{proof}

\begin{lemma}\label{lem:init_marking_satisfies_property_implies_algorithm_yes}
If $(\place{0},\age{0}) \models \EF \psi$ then 
Algorithm \ref{alg:reachability} returns "YES".
\end{lemma}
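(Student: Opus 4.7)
The plan is to apply the completeness half of Theorem~\ref{thm:soundness_completeness} and then match the resulting symbolic run against the algorithm's behaviour using the monotonicity theorem. From $(\place{0},\age{0}) \models \EF\psi$ one obtains a concrete run $(\place{0},\age{0}) \goes{}^* (\place{},\age{})$ with $(\place{},\age{}) \models \psi$. Theorem~\ref{lem:extrapolation_below_abstraction_equiv} guarantees that the extrapolation abstraction $\abstrname{\mathit{ext}}$ (as introduced in the proof of Lemma~\ref{lem:algorithm_yes_implies_initial_marking_satisfies_property}) satisfies $\abstrname{\mathit{id}} \abstrsubseteq \abstrname{\mathit{ext}} \abstrsubseteq \abstrname{\equiv}$, so completeness yields a symbolic run
$$
(\place{0},\{\age{0}\}) \; \symtrans{\epsilon}_{\abstrname{\mathit{ext}}} \; M_0 \; \symtrans{T}_{\abstrname{\mathit{id}}} \; \symtrans{\epsilon}_{\abstrname{\mathit{ext}}} \; M_1 \; \cdots \; \symtrans{T}_{\abstrname{\mathit{id}}} \; \symtrans{\epsilon}_{\abstrname{\mathit{ext}}} \; M_n
$$
with $M_n = (\place{},\Val{n})$ and $\age{} \in \Val{n}$; since atomic propositions depend only on the placement part, $M_n \models \psi$. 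Each combined step $\symtrans{T}_{\abstrname{\mathit{id}}} \symtrans{\epsilon}_{\abstrname{\mathit{ext}}}$ matches exactly one call to \textbf{succ} from Algorithm~\ref{alg:simple_successor_generation}, and $M_0$ is precisely the marking pushed onto WAITING by the initialisation of Algorithm~\ref{alg:reachability}.

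The core of the proof is an induction on $i \in \{0, \ldots, n\}$ establishing: unless the algorithm has already returned YES, there exists a marking $\hat M_i$ which the algorithm inserted into WAITING at some point during its execution and which satisfies $M_i \markingsubseteq{} \hat M_i$. The base case is immediate: either the initial $\psi$-check succeeds, or $\hat M_0 := M_0$ is pushed onto WAITING. For the inductive step, assume $\hat M_{i-1}$ exists. By transitivity of $\markingsubseteq{}$ and the replacement policy of the algorithm (an entry is removed from PASSED/WAITING only when a strictly larger one is inserted), one can always track a chain of previously-inserted markings dominating $M_{i-1}$, and termination (Lemma~\ref{lem:algorithm_termination}) ensures that one such representative, call it $\hat M_{i-1}'$, is eventually dequeued and processed by \textbf{succ}. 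Applying Theorem~\ref{thm:monotonicity_theorem} (monotonicity) to the symbolic step $M_{i-1} \symtrans{}_{\abstrname{\mathit{ext}}} M_i$ together with $M_{i-1} \markingsubseteq{} \hat M_{i-1}'$ yields a successor $\hat M_i$ of $\hat M_{i-1}'$ with $M_i \markingsubseteq{} \hat M_i$. The algorithm either inserts $\hat M_i$ into WAITING (and the invariant is preserved directly) or discards it because some larger marking is already present; in the latter case that dominator, itself previously inserted, takes over the role of $\hat M_i$.

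To conclude, apply the invariant to $i = n$. Since $M_n \models \psi$ and the hypothesis on $\setOfIncPlaces$ forbids monotonicity-breaking places of $\psi$, Lemma~\ref{lem:inc_empty_all_properties} gives $\hat M_n \models \psi$. But $\hat M_n$ was at some point inserted into WAITING and therefore underwent a $\psi$-check immediately prior to insertion; that check must have returned YES, contradicting the standing assumption that the algorithm returned NO.

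The main obstacle I anticipate is the careful bookkeeping around the dominator chain in the inductive step: whenever the current representative in PASSED/WAITING is evicted by a strictly larger marking, one must switch to the new marking while preserving the property \emph{``was inserted into WAITING (and hence $\psi$-checked)''}. This is handled by observing that evictions are only triggered by insertions, so each eviction automatically supplies a fresh inserted marking to continue the chain; finiteness of the symbolic state-space (Lemma~\ref{lem:finite_number_of_extrapolated_dbms} and $k$-boundedness) then ensures the chain stabilises in finitely many replacements.
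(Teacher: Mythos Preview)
Your proposal is correct and follows essentially the same route as the paper: obtain the alternating symbolic run from completeness (Theorem~\ref{thm:soundness_completeness}) with $\abstrname{\mathit{ext}}$, use Theorem~\ref{thm:monotonicity_theorem} to argue that the algorithm always maintains a $\markingsubseteq{}$-dominator of each $M_i$ despite the discarding/eviction policy, and finish with Lemma~\ref{lem:inc_empty_all_properties}. The paper's proof leaves the dominator-tracking argument at the level of ``by Theorem~\ref{thm:monotonicity_theorem} it is safe to skip these symbolic markings'', whereas you spell out the induction on $i$ and the replacement-chain bookkeeping explicitly; this is a refinement rather than a different approach.
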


\begin{proof}
Assume that $(\place{0},\age{0}) \models \EF \psi$. This means that 
$(\place{0},\age{0}) \goes{}^* (\place{},\age{})$ and $(\place{}, \age{}) \models \psi$. We must show that Algorithm \ref{alg:reachability} 
returns "YES".
We define $\abstrname{\mathit{ext}}(\place{}, [D]) \stackrel{\text{def}}{=} [\ext{D}]$ as before.
By \cref{lem:extrapolation_below_abstraction_equiv} and \cref{thm:soundness_completeness}, we get that $\symtrans{}_{\abstrname{\mathit{ext}}}$ is complete. Thus, there exists a symbolic marking $(\place{},[D]) \models \psi$ such that $(\place{0}, [D_0]) \symtrans{\epsilon}_{\abstrname{\mathit{ext}}} \circ \, \, (\symtrans{T}_{\abstrname{\mathit{id}}} \circ \symtrans{\epsilon}_{\abstrname{\mathit{ext}}})^* \,\, (\place{}, [D])$ where $[D_0] = \{ v_0 \}$ and $\age{} \in [D]$.

We will now argue that Algorithm~\ref{alg:reachability} will find a symbolic marking  $(\place{}', D')$ such that $(\place{},[D]) \markingsubseteq{} (\place{}', [D'])$. It is easy to see that Algorithm~\ref{alg:reachability} together with Algorithm~\ref{alg:simple_successor_generation} implements a symbolic exploration of the form $\symtrans{\epsilon}_{\abstrname{\mathit{ext}}} \circ \, \, (\symtrans{T}_{\abstrname{\mathit{id}}} \circ \symtrans{\epsilon}_{\abstrname{\mathit{ext}}})^*$. 
However, notice that the algorithm discards some of the discovered symbolic markings (lines \ref{alg_line:discrete_inclusion_subset} to \ref{alg_line:discrete_inclusion_superset_end} in Algorithm~\ref{alg:reachability}). If the algorithm finds a symbolic marking $(\place{}',D')$ for which $(\place{}',[D']) \markingsubseteq{} (\place{}'',[D''])$ for some $(\place{}'',D'')$ in the PASSED or WAITING list, it will discard $(\place{}',D')$ (line \ref{alg_line:discrete_inclusion_subset}). Similarly, if $(\place{}'',[D'']) \markingsubseteq{} (\place{}',[D'])$ for some $(\place{}'',D'')$ in the PASSED or WAITING list, it will remove all markings $(\place{}'',[D'']) \markingsubseteq{} (\place{}',[D'])$ from both the PASSED and WAITING list (lines \ref{alg_line:discrete_inclusion_superset_start} to \ref{alg_line:discrete_inclusion_superset_end}). However, by \cref{thm:monotonicity_theorem} it is safe to skip these symbolic markings since the future behaviour of the smaller symbolic markings is included in the larger symbolic marking. 
Thus, it follows that Algorithm~\ref{alg:reachability} will find a symbolic marking $(\place{}',D')$ such that $(\place{},[D]) \markingsubseteq{} (\place{}', [D'])$. 
By \cref{lem:inc_empty_all_properties}, we have that if the smaller marking satisfies $\psi$ then the larger marking $(\place{}',[D'])$ also satisfies $\psi$.  
Thus, Algorithm~\ref{alg:reachability} returns "YES".
\qed
\end{proof}

The correctness of the reachability algorithm is hence established.

\section{Experiments}\label{sec:experiments}

\begin{table}[t]
\begin{center}
\begin{tabular}{|c|r|r|r|r|}
\hline
Delay
   & TAPAAL & TAPAAL incl. & Broadcast & Deg.2 Broadcast\\
\hline
28 & 10.8 & 10.4 & 11.6 & 11.6\\
24 & 12.1 & 12.0 & 102.3 & 48.8\\
20 & 17.0 & 16.4 & 456.2 & 88.0\\
16 & 92.6 & 90.7 & 207.4 & 137.7\\
\hline
\end{tabular}
\end{center}
\vspace{-6mm}
\caption{PMS case study scaled by the sampling delay (time in seconds)}
\label{fig:PMS}
\end{table}

\begin{table}[t]
\begin{center}
\begin{tabular}{|c|r|r|r|r|}
\hline
Messages
   & TAPAAL & TAPAAL incl. & Broadcast & Deg.2 Broadcast\\
\hline
2 &  2.5 & 0.6  & 2.9  & 2.3  \\
3 &  11.6 & 2.1 & 12.0  & 7.8  \\
4 &  46.3 & 8.0  & 46.2 & 24.9 \\
5 &  164.1 & 29.1  & 165.0  & 73.5  \\
6 &  $>$400.0 & 109.6  & $>$400.0  & 197.7  \\
7 &  $>$400.0 & 330.4  & $>$400.0  & $>$400.0  \\
\hline
\end{tabular}
\end{center}
\vspace{-6mm}
\caption{BAwPC scaled by number of retransmission messages 
(time in seconds)}
\label{fig:BAwPC}
\end{table}

We implemented the reachability algorithm in {\tt C++} and 
fully integrated it into the tool TAPAAL~\cite{DJJJMS:TACAS:12} 
(www.tapaal.net), an 
open-source and platform-independent editor, simulator
and verifier of extended timed-arc Petri nets. In order to document
the performance of our proposed algorithm, we present two larger case studies
of Patient Monitoring System (PMS) and a communication protocol 
from the WS-Business Activity standard~\cite{wsba}. Both models
can be downloaded from the tool's homepage.

The patient monitoring system (PMS) is a case study
taken from~\cite{TSPN-PMS:2012}. The system monitors the pulse rate
and the level of oxygen saturation via sensors applied on
the skin of a patient. It consists of three components:
sampling subsystem, signal analyzer and alarm. The purpose
of the PMS model is to verify that abnormal situations dangerous
for the patient's health are detected within given deadlines.
We have verified the model for deadline violation both in the
sampling component and the signal analyzer. The sampling delay 
has been varied from 28 down to 16 seconds.
This increased the complexity of the verification, as the queries
were still satisfied and the whole state-space had to be searched.

In the second case study we verify the correctness of one of the
web services coordination protocols called
Business Activity with Participant Completion (BAwPC)~\cite{wsba}.
Our model is based on the work presented in~\cite{WSBA:verification}
where an enhanced protocol that avoids reaching any invalid states
is given. We modelled the protocol in TAPAAL and considered
asynchronous communication where messages can be lost; the model
is scaled by the number of extra messages that can be used for
retransmissions. The protocol is correct and hence the whole state-space is 
searched. 

We compare the performance of our implementation with the UPPAAL engine where
the timed automata models were obtained by automatic translations
(called broadcast and degree 2 broadcast; 
see~\cite{JJMS:EPEW:10,JJMS:SOFSEM:11} for the details) 
from the TAPN models. 
We remark that the verification times of the translated TAPN models are 
in general comparable with native UPPAAL models and in some examples
the translated models verify even faster than the native
ones~\cite{JJMS:SOFSEM:11}. A direct comparison with other Petri net
tools extended with time like Romeo~\cite{romeo} and 
TINA~\cite{tina} is not possible due to the
radically different semantics of the Petri net models used in these tools. 

All experiments were run on Macbook Pro with 2.7 GHz 
Intel Core i7 with 8 GB RAM using BFS search strategy and the results 
are presented
in Tables~\ref{fig:PMS} and~\ref{fig:BAwPC}.
The column
TAPAAL refers to our algorithm where the set of inclusion places
has been set to empty and TAPAAL incl. is our algorithm with 
the largest possible inclusion set. The user has the possibility
to choose between these algorithms (or even manually select the
concrete inclusion places) because for example in the case of 
1-safe Petri nets where the inclusion is only rarely applied,
the algorithm with the maximum inclusion can be slower due to the
implementation overhead connected with inclusion checking of markings
on the passed and waiting list. Indeed, in situations 
like in Table~\ref{fig:PMS} the full inclusion checking is not
that beneficial opposite to nets like in Table~\ref{fig:BAwPC} where
we have many tokens (messages) in the same place.

\section{Conclusion}
We presented a reachability algorithm for extended timed-arc Petri nets
and implemented it within the tool TAPAAL.
The algorithm includes efficient extrapolation and 
symmetry reduction techniques that show a very encouraging
performance even on larger
case-studies. We would like
to emphasize the fact that all features that are implemented
in the tool are formally defined and proved correct.
We believe that our tool, available at www.tapaal.net, 
is one of a rather few reasonably-sized model 
checkers with a complete correctness proof taking into account all 
implemented optimizations and reduction techniques.
In the future work we shall look at extending the technique
to liveness properties and at further performance improvements 
by using for example the 
LU-extrapolation~\cite{DBLP:journals/sttt/BehrmannBLP06}.

\paragraph{Acknowledgements.} We would like to thank the anonymous
reviewers for their comments.

\bibliographystyle{eptcs}
\bibliography{references}


\end{document}